\newtheorem{theorem}{Theorem}
\newtheorem{lemma}{Lemma}
\newtheorem{remark}{Remark}
\newtheorem{definition}{Definition}
\begin{document}
\title{Asymptotic Performance of the GSVD-Based MIMO-NOMA Communications with Rician Fading}
\author{ Chenguang Rao, Zhiguo Ding, \emph{Fellow, IEEE}, Kanapathippillai Cumanan, \emph{Senior Member, IEEE} and Xuchu Dai\thanks{
		The work is supported by the National Natural Science Foundation of China (No. 61971391) and China Scholarship Council. 
		
		C. Rao and X. Dai are with the CAS Key Laboratory of Wireless-Optical Communications, University of Science and Technology of China, No.96 Jinzhai Road, Hefei, Anhui Province, 230026, P. R. China. (e-mail: rcg1839@mail.ustc.edu.cn; daixc@ustc.edu.cn).
		
		Z. Ding is with Department of Electrical Engineering and Computer Science, Khalifa University, Abu Dhabi, UAE, and Department of Electrical and Electronic Engineering, University of Manchester, Manchester, U.K. (e-mail:, zhiguo.ding@manchester.ac.uk).
		
		Kanapathippillai Cumanan is with the School of Physics, Engineering and Technology, University of York, York YO10 5DD, U.K. (e-mail: kanapathippillai.cumanan@york.ac.uk)
	}\vspace{-2em}}
\maketitle

\begin{abstract} 
	In recent years, the multiple-input multiple-output (MIMO) non-orthogonal multiple-access (NOMA) systems have attracted a significant interest in the relevant research communities. As a potential precoding scheme, the generalized singular value decomposition (GSVD) can be adopted in MIMO-NOMA systems and has been proved to have high spectral efficiency. In this paper, the performance of the GSVD-based MIMO-NOMA communications with Rician fading is studied. In particular, the distribution characteristics of generalized singular values (GSVs) of channel matrices are analyzed. Two novel mathematical tools, the linearization trick and the deterministic equivalent method, which are based on operator-valued free probability theory, are exploited to derive the Cauchy transform of GSVs. An iterative process is proposed to obtain the numerical values of the Cauchy transform of GSVs, which can be exploited to derive the average data rates of the communication system. In addition, the special case when the channel is modeled as Rayleigh fading, i.e., the line-of-sight propagation is trivial, is analyzed. In this case, the closed-form expressions of average rates are derived from the proposed iterative process. Simulation results are provided to validate the derived analytical results.
\end{abstract} 

\vspace{-1em}

\begin{IEEEkeywords} 
	Free deterministic equivalents, generalized singular value decomposition (GSVD), linearization trick, multiply-input multiply-output (MIMO), non-orthogonal multiple access (NOMA), operator-valued free probability, Rician fading channel.
\end{IEEEkeywords} 

\vspace{-1em}

\section{Introduction}
In recent years, with the increasing demand for high-quality and extremely high-throughput communications, multiple-input multiple-output (MIMO) has been considered to be one of the crucial technologies for beyond fifth-generation (B5G) and sixth-generation (6G), and has widely been analyzed and applied in wireless communications \cite{MIMO1,MIMO2,MIMO3,MIMO4}. In many MIMO communication scenarios, more than one user are served by one base station simultaneously. To effectively use the spectrum resources in multi-user MIMO communication scenarios, non-orthogonal multiple access (NOMA) technology has been widely adopted \cite{MIMO-NOMA1,MIMO-NOMA2,MIMO-NOMA3}. With the MIMO-NOMA scheme, the base station serves more than one user in the same time-frequency resource blocks, which has been proved to have superior performance than the traditional MIMO-orthogonal multiple access (OMA) scheme, especially at high signal-to-noise-ratios (SNRs) \cite{SNR}. However, since the base station and users equip multiple antennas in a MIMO-GSVD system, the interference between subchannels cannot be ignored, which motivates the analysis of precoding schemes for MIMO-NOMA \cite{precoding1,precoding2,precoding3}. Among the precoding schemes, the generalized singular value decomposition (GSVD) emerges due to its trade-off for complexity and performance \cite{GSVDo,GSVD1,Chen,OGSVD}. In the GSVD-based MIMO-NOMA communication system, the channel matrices of two users are diagonalized simultaneously thereupon the MIMO channels are converted into several single-input single-output (SISO) subchannels, and there is no interference between each subchannel. The performance of the GSVD-based MIMO-NOMA scheme has been studied in \cite{Chen,GSVDP}, and \cite{OGSVD}. However, all the studies have only analyzed the case of Rayleigh fading, i.e., only non line-of-sight (nLoS) propagation components are considered. In many communication scenarios, the line-of-sight (LoS) propagation cannot be ignored, which demands the channel matrix to be modeled as Rician fading. To the best knowledge of the authors, the performance of GSVD-based MIMO-NOMA system communications with Rician fading has not been studied in the literature, which motivates our work in this paper. 

The main challenge is to find an approach to obtain the distribution characteristic of the generalized singular values (GSVs) of channel matrices. In \cite{Chen}, the GSVs of channel matrices \(\mathbf{H}_1\) and \(\mathbf{H}_2\) are proved to be equal to the eigenvalues of a matrix \(\mathbf{L} = \mathbf{H}_1(\mathbf{H}_2^H\mathbf{H}_2)^{-1}\mathbf{H}_1^H\) when \(\mathbf{H}_2\) has full column rank. In the case of Rayleigh fading, the channel matrix \(\mathbf{H}_i\) is modeled as a Gaussian random matrix, where the random matrix theory can be applied to obtain the probability density function (PDF) of eigenvalues of \(\mathbf{L}\). However, in the case of Rican fading, the channel matrix is modeled as the sum of a Gaussian random matrix and a deterministic matrix, where the random matrix theory cannot be applied directly. Therefore, the operator-valued free probability, which is an extension of free probability, is used in this paper. Two useful tools based on the operator-valued free probability, the linearization trick and free deterministic equivalents, are exploited. In \cite{OFPTRMT}, the linearization trick was proposed and used to convert a complex Gaussian random matrix polynomial problem to a linear addition problem in block random matrices, which can be easily solved by using the subordination theorem. To study more general cases, in \cite{FDEo}, the free deterministic equivalents method was proposed to derive the asymptotic distribution of the eigenvalues of matrix polynomial consisting of self-adjoint and non self-adjoint Gaussian random matrices, deterministic matrices, and Haar-distributed unitary matrices. By exploiting this method, the matrices are replaced with operator-valued random variables, and the matrix polynomial is converted to a new polynomial consisting of operator-valued random variables, whose distribution characteristic can be derived by applying the properties of the operator-valued free probability theorem. The gap between the analytical results derived with the free deterministic equivalents and actual values is proved to vanish when the dimension goes infinite. However, this study dealt with the polynomial whose each element only has a positive degree, while matrix \(\mathbf{L}\) contains the \(-1\) degree, and cannot be applied directly. Motivated by this, we apply the linearization trick and free deterministic equivalents to find a new way to derive the asymptotic distribution characteristic of the eigenvalues of \(\mathbf{L}\), as long as the GSVs of two matrices modeled with Rician fading. The main contributions of this paper are summarized as follows:

\begin{itemize}
	\item A GSVD-based MIMO-NOMA transmission system with Rician fading is considered. For performance analysis, the asymptotic distribution characteristics of the GSVs of channel matrices \(\mathbf{H}_1\) and \(\mathbf{H}_2\) are derived. Two cases under the conditions on number of antennas are discussed respectively. When \(\mathbf{H}_2\) has full column rank, the problem is equivalent to deriving the asymptotic distribution characteristics of the eigenvalues of a matrix \(\mathbf{L} = \mathbf{H}_1(\mathbf{H}_2^H\mathbf{H}_2)^{-1}\mathbf{H}_1^H\). The linearization trick is used twice to convert \(\mathbf{L}\) to a new matrix \(\mathbf{J}\), whose each element is the polynomial of \(\mathbf{H}_i\) with degree \(0\) or \(1\), \(i=1,2\). Then the free deterministic equivalents method is applied to construct an iterative process to obtain the distribution of \(\mathbf{G}\), as long as the GSVs. When \(\mathbf{H}_2\) does not has a full column rank, a full-rank matrix \(\mathbf{H}_3\) with an approximation parameter \(\epsilon\) is constructed, while the pair \(\{\mathbf{H}_1,\mathbf{H}_3\}\) is proved to contains the same GSVs as the pair \(\{\mathbf{H}_1,\mathbf{H}_2\}\) when \(\epsilon\to 0\). Then the distribution of \(\mathbf{G}\) can be derived by using a similar way.
	\item The average rates of two users are derived based on the distribution characteristics of the GSVs of channel matrices. The high-accuracy of these results are verified with the simulation, even if the numbers of antennas are small.
	\item When the channel is modeled as Rayleigh fading, the results derived in the case of Rician fading are simplified, and closed-form expressions of average rates are derived in this special case.
\end{itemize}

\vspace{-2em}

\section{System Model}

Consider a MIMO-NOMA downlink communication system with one base station and two users, which are denoted by \(\mathrm{S}\), \(\mathrm{U}_1\) and \(\mathrm{U}_2\), respectively. \footnote{For the case with more than two users, a hybrid approach can be employed in which users are divided into groups of two users, each group is allocated orthogonal resources, and within each two-user group, the proposed GSVD-based MIMO-NOMA scheme can be utilized.} \(\mathrm{S}\) is equiped with \(N\) antennas, while the \(\mathrm{U}_i\) is equiped with \(M_i\) antennas. The channel matrix between \(\mathrm{S}\) and \(\mathrm{U}_i\) is denoted by \(\mathbf{H}_i\in\mathbb{C}^{M_i\times N}\), which is modeled with Rician fading model. \(\mathbf{H}_i = \bar{\mathbf{H}}_i+\widetilde{\mathbf{H}}_i\), where \(\bar{\mathbf{H}}_i\) is a deterministic matrix that represents the line of sight (LoS) component, and \(\widetilde{\mathbf{H}}_i\) is a random Gaussian matrix with mean \(\mathbf{O}\) and covariance matrix \(\mathbf{I}\) that represents the none-line of sight (nLoS) component. \(\mathbf{I}\) represents an identity matrix, and \(\mathbf{O}\) represents a zero matrix. According to the NOMA scheme, \(\mathrm{S}\) broadcasts the superposed message for two users. Denote \(\mathbf{s}_i\) as the message for the \(\mathrm{U}_i\), \(\mathbf{s}_i\) satisfies \(E\{\mathbf{s}_i\mathbf{s}_i^H\} = \mathbf{I}_N\). The message broadcast by \(\mathrm{S}\) can be expressed as \(\mathbf{x} = \sqrt{l_1}\mathbf{s}_1+\sqrt{l_2}\mathbf{s}_2\). \(0<l_i<1\) represents the power allocation coefficient for \(\mathrm{U}_i\) and satisfies \(l_1+l_2=1\). Denote \(\mathbf{P}\in\mathbb{C}^{N\times N}\) as the precoding matrix, and \(\mathbf{Q}_i\in\mathbb{C}^{M_i\times M_i}\) as the decoding matrix. The received message by \(\mathrm{U}_i\) can be expressed as follows:
\begin{equation}
	\widetilde{\mathbf{y}}_i = \mathbf{Q}_i\mathbf{y}_i = \frac{\sqrt{P}}{t\sqrt{d_i^{\tau}}}\mathbf{Q}_i\mathbf{H}_i\mathbf{P}\mathbf{x}+\mathbf{n},
\end{equation}
where \(d_i\) represents the distance between \(\mathrm{S}\) and \(\mathbf{U}_i\), \(\tau\) represents the large-scale path loss exponent, \(t\) represents the long-term power normalization coefficient, and \(P\) represents the transmission power. \(\mathbf{n}\) is a white noise vector with power \(P_N\). Without loss of generality, the average gain for \(\mathrm{U}_i\) is assumed to be larger than \(\mathrm{U}_2\), i.e., \(\frac{E\{\text{tr}\{\mathbf{H}_1\mathbf{H}_1^H\}\}}{d_1^{\tau}}\geq \frac{E\{\text{tr}\{\mathbf{H}_2\mathbf{H}_2^H\}\}}{d_2^{\tau}}\). To cancel the interference between subchannels, the GSVD precoding scheme is applied. The GSVD of \(\{\mathbf{H}_1,\,\mathbf{H}_2\}\) is denoted as follows:
\begin{equation}
	\mathbf{H}_i = \mathbf{U}_i^H\mathbf{\Sigma}_i\mathbf{V}^{-1},\,i = 1,2,
\end{equation}
where \(\mathbf{U}_i\in\mathbb{C}^{M_i\times M_i}\) is an unitary matrix, \(\mathbf{V}\in\mathbb{C}^{N\times N}\) is an invertible matrix, and \(\mathbf{\Sigma}_i\in \mathbb{C}^{M_i\times N}\) is a rectangular diagonal matrix with \(S\) non-zero diagonal elements. From \cite{GSVDo}, \(S\) can be known as \(S = \min\{M_1,N\}+\min\{M_2,N\}-\min\{M_1+M_2,N\}\). Denote \(\mathbf{\Sigma}_i = \text{diag}\{\sigma_{i,1},\sigma_{i,2},...,\sigma_{i,S}\}\). The GSV \(\omega_j,j = 1,2,...,S\) is defined as follows:
\begin{equation}
	\omega_j = \frac{\sigma_{1,j}^2}{\sigma_{2,j}^2}.
\end{equation}
Then \(\sigma_{i,j}\) can be expressed by \(\omega_j\) as \(\sigma_{1,j}^2 = \frac{\omega_j}{1+\omega_j}\) and \(\sigma_{2,j}^2 = \frac{1}{1+\omega_j}\). Set the precoding and decoding matrices as \(\mathbf{P} = \mathbf{V}\) and \(\mathbf{Q}_i = \mathbf{U}_i\), then the received message \(\widetilde{\mathbf{y}}_i\) can be expressed as follows:
\begin{equation}
	\begin{aligned}
		\widetilde{\mathbf{y}}_i =  \frac{\sqrt{P}}{t\sqrt{d_i^{\tau}}}\mathbf{U}_i\mathbf{H}_i\mathbf{V}\mathbf{x}+\mathbf{n}
		=\frac{\sqrt{P}}{t\sqrt{d_i^{\tau}}}\mathbf{\Sigma}_i\mathbf{x}+\mathbf{n},
	\end{aligned}
\end{equation}
and the message at \(j\)-th
subchannel of \(\widetilde{\mathbf{y}}_i\) can be expressed as follows:
\begin{equation}
	y_{i,j} = \frac{\sqrt{P}}{t\sqrt{d_i^{\tau}}}\sigma_{i,j}x_j+n_j,
\end{equation}
where \(x_j = \sqrt{l_1}s_{1,j}+\sqrt{l_2}s_{2,j}\) represents the \(j\)-th element of \(\mathrm{U}_i\). The MIMO channel is now converted into several parallel SISO subchannels, where the successive interference cancellation (SIC) can be applied to eliminate interference. In this paper, the statistical channel state information (CSI)-based SIC is used \cite{sCSI}, i.e., the SIC is applied in the user who has a larger average power of channel fading. Since \(\frac{E\{\text{tr}\{\mathbf{H}_1\mathbf{H}_1^H\}\}}{d_1^{\tau}}\geq \frac{E\{\text{tr}\{\mathbf{H}_2\mathbf{H}_2^H\}\}}{d_2^{\tau}}\), SIC is applied in \(\mathrm{U}_1\). Specifically, \(\mathrm{U}_1\) decodes \(s_2\) first, and then decoding \(s_1\) after eliminating \(s_2\). \(\mathrm{U}_2\) decodes \(s_2\) by regarding \(s_1\) as noise directly. Then the rate of \(s_{i,j}\) can be expressed as follows:
\begin{equation}
	R_{1,j} = \log\left(1 + \frac{\omega_j}{1+\omega_j}\frac{l_1\rho}{td_1^{\tau}}\right),
\end{equation}
\begin{equation}
	R_{2,j} = \log\left(1 + \frac{l_2}{l_1+(1+\omega_j)\rho^{-1}td_2^{\tau}}\right),
\end{equation}
where \(\rho = \frac{P}{P_N}\) SNR. In this paper, the average performance \(E\{R_{i,j}\}\) for each sub-channel is considered, i.e., \(\omega_j\) is regarded as the same with any \(j\). Thus, the average rate of \(\mathrm{U}_i\) can be expressed as follows:
\begin{equation}\label{aR1}
	\bar{R}_1 = S\int_{0}^{+\infty}\log\left(1 + \frac{x }{1+x}\frac{l_1\rho}{td_1^{\tau}}\right)dF_{\omega}(x),
\end{equation}
\begin{equation}\label{aR2}
	\bar{R}_2 = S\int_{0}^{+\infty}\log\left(1 + \frac{l_2}{l_1+(1+x)\rho^{-1}td_2^{\tau}}\right)dF_{\omega}(x),
\end{equation}
where \(F_{\omega}(x)\) denotes the cumulative density function (CDF) of \(\omega\).

\section{Performance Analysis}
In this work, the numbers of antennas are assumed to satisfy \(M_1 \leq M_2\) and \(M_1+M_2> N\). Since the GSVs of \(\{\mathbf{H}_1,\mathbf{H}_2\}\) are reciprocal of the GSVs of \(\{\mathbf{H}_2,\mathbf{H}_1\}\), the case \(M_1 > M_2\) is equivalent to the case \(M_1 \leq M_2\) by swapping \(\mathbf{H}_1\) and \(\mathbf{H}_2\). As for the case of \(M_1+M_2 \leq N\), it can be understood from \cite{Chen} that \(\omega_j\) is deterministic, and has little value for analysis.

From \cite{Chen}, it can be known that when \(M_2\geq N\), \(\omega_j\) equals to the non-zero eigenvalues of a matrix \(\mathbf{L}\), which is expressed as follows:

\vspace{-1em}

\begin{equation}
	\mathbf{L} = \mathbf{H}_1(\mathbf{H}_2^H\mathbf{H}_2)^{-1}\mathbf{H}_1^H.
\end{equation}

\vspace{-1em}

Then we focus on the approach to derive the eigenvalue distribution of \(\mathbf{L}\). Before the analysis, the Cauchy transform is used, which contains the asymptotic distribution characteristics of random variables \cite{Cauchy}.

\vspace{-1em}

\begin{definition}
	\(\mu\) is a random variable with \(F_{\mu}(x)\) CDF, then its Cauchy transform of \(\mu\) is defined as follows:
	\begin{equation}
		G_{\mu}(z) = \int_{R}\frac{1}{z-x}dF_{\mu}(x).
	\end{equation}
\end{definition}
In this paper, \(z\) is restricted as \(\text{Re}(z)<0\), where \(\text{Re}\{z\}\) represents the real part of \(z\). The problem is now transformed to obtaining \(G_{\omega}(z)\). Before the analysis, the operator-valued free probability is introduced as the mathematical preliminary in the following subsection.

\vspace{-1em}

\subsection{Free Probability and Operator-Valued Free Probability}
Free probability theory is first proposed by Voiculescu in 1985 as a tool to analyze non-commutative random variables, for example, matrices. Operator-valued free probability is an extension of free probability, which is first proposed by Voiculescu in \cite{OFPTo} to solve the matrix polynomial problems. In this subsection, the fundamental concepts of free probability and operator-valued free probability are briefly introduced, respectively. More details of free probability and operator-valued free probability can be found in the Appendix A in \cite{Lu} and \cite{FPT}.

The definition of the non-commutative probability space is presented as follows:

\begin{definition}
	Denote \(\mathcal{A}\) as a unital non-commutative algebra over \(\mathbb{C}\) with a unit \(1_{\mathcal{A}}\) and a unital linear functional \(\phi:\mathcal{A}\to\mathbf{C}\) satisfying \(\phi(1_{\mathcal{A}}) = 1\). Then the pair \((\mathcal{A},\phi)\) is defined as a non-commutative probability space. The elements \(a\in \mathcal{A}\) are called as non-commutative random variables.
\end{definition}

Based on the definition of non-commutative probability space, the operator-valued non-commutative probability space is defined as follows:

\begin{definition}
	Denote \(\mathcal{A}\) as a unital non-commutative algebra over \(\mathbb{C}\) with a unit \(1_{\mathcal{A}}\), and denote \(\mathcal{B}\subset\mathcal{A}\) as a unital subalgebra of \(\mathcal{A}\). Define a unital linear functional \(\phi:\mathcal{A}\to\mathcal{B}\) satisfying:
	\begin{itemize}
		\item \(\phi(B) = B\), for any \(B\in\mathcal{B}\).
		\item \(\phi(B_1AB_2) = B_1\phi(A)B_2\), for any \(A\in\mathcal{A}\) and \(B_1,B_2\in\mathcal{B}\).
	\end{itemize}
	Then the triplet \((\mathcal{A},\phi,\mathcal{B})\) is defined as a \(\mathcal{B}\)-valued non-commutative probability space.
\end{definition}

In this paper, free probability theory and operator-valued free probability theory are applied in random matrix theory. Denote \(\mathcal{M}_n\) as the algebra of \(n\times n\) complex random matrices. For a element \(\mathbf{X}\in\mathcal{M}_n\), define \(\phi(\mathbf{X}) = \frac{1}{n}E\{\text{Tr}\{\mathbf{X}\}\}\) and \(1_{\mathcal{M}_n} = \mathbf{I}_n\). Then \((\mathcal{M}_n,\phi)\) is a non-commutative probability space. 

Denote \(\mathcal{D}_n\) as the algebra of \(n\times n\) complex diagonal matrices. Define a map \(E_{\mathcal{D}_n}:\mathcal{M}_n\to\mathcal{D}_n\) defined as follows:

\begin{equation}
	E_{\mathcal{D}_n}\{\mathbf{X}\} = \text{diag}\{E\{X_{1,1}\},E\{X_{2,2}\},...,E\{X_{n,n}\}\}.
\end{equation}
The triplet \((\mathcal{M}_n,E_{\mathcal{D}_n},\mathcal{D}_n)\) is a \(\mathcal{D}_n\)-valued probability space. Then the Cauchy transform can be extended over \((\mathcal{M}_n,E_{\mathcal{D}_n})\) as the \(\mathcal{D}_n\)-valued Cauchy transform, which is defined as follows:

\begin{definition}
	\((\mathcal{M}_n,E_{\mathcal{D}_n},\mathcal{D}_n)\) is a \(\mathcal{D}_n\)-valued probability space. Let \(\mathbf{X}\in\mathcal{M}_n\), then \(\mathcal{D}_n\)-valued Cauchy transform of \(\mathbf{X}\) is defined as follows:
	\begin{equation}
		\mathcal{G}_{\mathbf{X}}^{\mathcal{D}_n}(\mathbf{A}) = E_{\mathcal{D}_n}\{\left(\mathbf{A-X}\right)^{-1}\},
	\end{equation}
	where \(\mathbf{A}\in\mathcal{D}_n\) satisfies \(\frac{1}{2i}(\mathbf{A}-\mathbf{A}^H)\succ 0\).
\end{definition}

Assume the expected cumulative distribution of the eigenvalues of \(\mathbf{X}\) converges to a random variable \(x\) when \(n\to+\infty\), it can be known that \(G_{x}(z) = \frac{1}{n}\text{Tr}\{\mathcal{G}_{\mathbf{X}}^{\mathcal{D}_n}(z\mathbf{\mathbf{I}_n})\}\) \cite{Lu}. Thus, \(G_{\omega}(z)\) can be obtained by deriving \(\mathcal{G}_{\mathbf{L}}^{\mathcal{D}_n}(z\mathbf{\mathbf{I}_n})\). However, \(\mathcal{G}_{\mathbf{L}}^{\mathcal{D}_n}(z\mathbf{\mathbf{I}_n})\) is still complex to derive. Thus, another method, the linearization trick, is exploited, which will be introduced in the following subsection.

\subsection{Linearization Trick}

In \cite{LTP,OFPTRMT}, the linearization trick is introduced, which is required in our work. The  main theorem of the linearization trick is introduced as follows:
\begin{theorem}\label{LTP}
	Assume \(\mathbf{X}_i,i = 1,2,...,M\) are complex random matrices in the \(\mathcal{D}_n\)-valued probability space \((\mathcal{M}_n,E^{\mathcal{D}_n})\). Assume \(\mathbf{P}\) is the polynomial of \(\mathbf{X}_i\) and can be expressed as follows:
	\begin{equation}
		\mathbf{P} = -\mathbf{u}^H\mathbf{Q}^{-1}\mathbf{v},
	\end{equation}
	where \(\mathbf{u}, \mathbf{v}\in\mathbb{C}^{K-1\times 1}\), \(\mathbf{Q}\in \mathbb{C}^{K-1\times K-1}\) is invertible. Each element in \(\mathbf{u}, \mathbf{v}\) and \(\mathbf{Q}\) is the polynomial of \(\mathbf{X}_i\) with degree \(\leq 1\). Then a new matrix over \(\mathcal{D}_{K}\)-valued probability space can be constructed as follows:
	\begin{equation}
		\hat{\mathbf{P}} = 
		\begin{pmatrix}
			0&\mathbf{u}^H\\
			\mathbf{v}&\mathbf{Q}
		\end{pmatrix}.
	\end{equation}
	Define a matrix
	\begin{equation}
		\mathbf{B} = 
		\begin{pmatrix}
			z\mathbf{I}_n &\mathbf{O}\\
			\mathbf{O}&\mathbf{O}_{K-n}
		\end{pmatrix}.
	\end{equation}
	Then the \(\mathcal{D}_K\)-valued Cauthy transform of \(\hat{\mathbf{P}}\) can be obtained as
	\begin{equation}\label{LT}
		\mathcal{G}^{\mathcal{D}_{K}}_{\hat{\mathbf{P}}}(\mathbf{B}) = 
		\begin{pmatrix}
			\mathcal{G}^{\mathcal{D}_{n}}_{\mathbf{P}}(z\mathbf{I}_n)&*&\\
			*&*
		\end{pmatrix}.
	\end{equation}
\end{theorem}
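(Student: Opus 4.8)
The plan is to compute the $\mathcal{D}_K$-valued Cauchy transform $\mathcal{G}^{\mathcal{D}_K}_{\hat{\mathbf{P}}}(\mathbf{B}) = E_{\mathcal{D}_K}\{(\mathbf{B}-\hat{\mathbf{P}})^{-1}\}$ directly, then read off its leading block and match it against the definition of $\mathcal{G}^{\mathcal{D}_n}_{\mathbf{P}}(z\mathbf{I}_n)$. The essential algebraic tool is the Schur-complement (block matrix inversion) formula; everything else is bookkeeping of signs and of how the conditional expectation restricts to the leading block.

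First I would form the resolvent argument in block form. Subtracting $\hat{\mathbf{P}}$ from $\mathbf{B}$ gives
\begin{equation}
	\mathbf{B}-\hat{\mathbf{P}} =
	\begin{pmatrix}
		z\mathbf{I}_n & -\mathbf{u}^H \\
		-\mathbf{v} & -\mathbf{Q}
	\end{pmatrix}.
\end{equation}
Since $\mathbf{Q}$ is invertible by hypothesis, so is $-\mathbf{Q}$, and the block inversion formula applies: the leading block of $(\mathbf{B}-\hat{\mathbf{P}})^{-1}$ equals the inverse of the Schur complement of the $(2,2)$ block, namely
\begin{equation}
	\big[(\mathbf{B}-\hat{\mathbf{P}})^{-1}\big]_{11} = \big(z\mathbf{I}_n - (-\mathbf{u}^H)(-\mathbf{Q})^{-1}(-\mathbf{v})\big)^{-1}.
\end{equation}

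Next I would simplify the Schur complement. Using $(-\mathbf{Q})^{-1}=-\mathbf{Q}^{-1}$, the three sign factors combine to a single minus sign, so $(-\mathbf{u}^H)(-\mathbf{Q})^{-1}(-\mathbf{v}) = -\mathbf{u}^H\mathbf{Q}^{-1}\mathbf{v} = \mathbf{P}$. Hence the leading block of the resolvent is precisely $(z\mathbf{I}_n-\mathbf{P})^{-1}$. Applying $E_{\mathcal{D}_K}$ and observing that the conditional expectation onto the diagonal algebra acts entrywise on the diagonal, so that its leading block is exactly $E_{\mathcal{D}_n}$ applied to the leading block of the resolvent, the $(1,1)$ entry of $\mathcal{G}^{\mathcal{D}_K}_{\hat{\mathbf{P}}}(\mathbf{B})$ becomes $E_{\mathcal{D}_n}\{(z\mathbf{I}_n-\mathbf{P})^{-1}\}$, which by the definition of the $\mathcal{D}_n$-valued Cauchy transform is exactly $\mathcal{G}^{\mathcal{D}_n}_{\mathbf{P}}(z\mathbf{I}_n)$. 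This establishes \eqref{LT}, the remaining off-diagonal and $(2,2)$ blocks being left unspecified (the $*$ entries).

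The step I expect to be the main obstacle is justifying that $\mathbf{B}$ is an admissible argument, i.e., that $(\mathbf{B}-\hat{\mathbf{P}})$ is invertible so that both transforms are well defined. The difficulty is that $\tfrac{1}{2i}(\mathbf{B}-\mathbf{B}^H)$ is only positive semidefinite, vanishing on the trailing block, whereas the definition nominally requires strict positivity. I would handle this by a perturbation argument, replacing $\mathbf{B}$ by $\mathbf{B}+i\eta\mathbf{I}_K$ (whose imaginary part is strictly positive for $\eta>0$), carrying out the computation above, and letting $\eta\downarrow 0$; invertibility of the relevant Schur complement is then guaranteed by the invertibility of $\mathbf{Q}$ together with the spectral restriction on $z$, which keeps $z$ off the spectrum of $\mathbf{P}$ and hence makes $z\mathbf{I}_n-\mathbf{P}$ invertible. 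The only other care needed is with the sign conventions in the Schur complement, which is routine.
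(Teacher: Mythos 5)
Your proposal is correct: the Schur-complement computation $\bigl[(\mathbf{B}-\hat{\mathbf{P}})^{-1}\bigr]_{11}=\bigl(z\mathbf{I}_n-(-\mathbf{u}^H)(-\mathbf{Q})^{-1}(-\mathbf{v})\bigr)^{-1}=(z\mathbf{I}_n-\mathbf{P})^{-1}$, followed by the observation that the block-diagonal expectation restricts to $E_{\mathcal{D}_n}$ on the leading block, together with the $i\eta$-regularization of $\mathbf{B}$, is exactly the standard argument behind the linearization trick. The paper itself gives no proof (it defers to the cited references), and those references prove the statement by this same Schur-complement route, so your write-up simply supplies the details the paper omits.
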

\begin{proof}
	See \cite{OFPTRMT} and \cite{LTP}. 
\end{proof}

In this theorem, \(\hat{\mathbf{P}}\) is called as the linearization of \(\mathbf{P}\).  \(\mathcal{G}^{\mathcal{D}_{n}}_{\mathbf{P}}(z\mathbf{I}_n)\) can be obtained by deriving \(\mathcal{G}^{\mathcal{D}_{K}}_{\hat{\mathbf{P}}}(\mathbf{B})\). Since \(\hat{\mathbf{P}}\) can be considered as the sum of selfadjoint operator-valued matrix polynomials of \(\mathbf{X}_i,i = 1,2,...,M\), \(\mathcal{G}^{\mathcal{D}_{K}}_{\hat{\mathbf{P}}}(\mathbf{B})\) can be derived by using the free deterministic equivalents and subordination formulation.

\begin{remark}
	There is a condition in the linearization trick that the polynomial entries in \(\mathbf{u}, \mathbf{v}\) and \(\mathbf{Q}\) must have degree \(\leq 1\). However, (\ref{LT}) also holds without this condition. In this paper, we will use this theorem without the condition.
\end{remark}

\subsection{The Approach to Obtain \(G_{\omega}(z)\)}\label{App}
In this subsection, the cases of \(M_2 \geq N\) and \(M_2 < N < M_1+M_2\) are discussed. 

\subsubsection{The case \(M_2 \geq N\)}

When \(M_2 \geq N\), \(\omega\) equals to the non-zero eigenvalues of \(\mathbf{L} = \mathbf{H}_1(\mathbf{H}_2^H\mathbf{H}_2)^{-1}\mathbf{H}_1^H\). Based on this precondition, the main process to obtain \(G_{\omega}(z)\) can be divided into several steps as follows: 
\begin{itemize}
	\item Apply the linearization trick to construct a matrix \(\mathbf{J}\) that \(\mathcal{G}^{\mathcal{D}_{M_1}}_{\mathbf{L}}(z\mathbf{I}_n)\) is in the \(\mathcal{G}^{\mathcal{D}_{4n}}_{\mathbf{J}}(z\mathbf{I}_n)\), while each element in \(\mathbf{J}\) is the polynomial of \(\mathbf{H}_i\) with degree \(0\) or \(1\).
	\item Apply the free deterministic equivalents method \cite{Lu} to construct a matrix \(\boldsymbol{\mathcal{J}}\) by satisfying \(\lim\limits_{n\to +\infty}\mathcal{G}^{\mathcal{D}_{4n}}_{\mathbf{J}}(z\mathbf{I}_n)-\mathcal{G}^{\mathcal{D}_{4n}}_{\boldsymbol{\mathcal{J}}}(z\mathbf{I}_n) = 0\).
	\item Derive \(\mathcal{G}^{\mathcal{D}_{4n}}_{\boldsymbol{\mathcal{J}}}(z\mathbf{I}_n)\) with subordination theorem. Then propose an iterative approach to obtain the approximated results of \(\mathcal{G}^{\mathcal{D}_{4n}}_{\mathbf{J}}(z_0\mathbf{I}_n)\) and \(\mathcal{G}^{\mathcal{D}_{M_1}}_{\mathbf{L}}(z_0\mathbf{I}_n)\).
	\item Find the relationship between \(G_{\omega}(z)\) and \(\mathcal{G}^{\mathcal{D}_{M_1}}_{\mathbf{L}}(z\mathbf{I}_n)\) to obtain \(G_{\omega}(z_0)\).
\end{itemize}

The detailed process is presented as follows. First, by applying Theorem \ref{LTP}, a matrix \(\hat{\mathbf{L}}\) can be constructed as follows:
\begin{equation}
	\hat{\mathbf{L}} =
	\begin{pmatrix}
		\mathbf{O} &\mathbf{H}_1 \\
		\mathbf{H}_1^H &-\mathbf{H}_2^H\mathbf{H}_2
	\end{pmatrix}.
\end{equation}
Denote \(n = M_1+N\), then 
\begin{equation}
	\mathcal{G}^{\mathcal{D}_{n}}_{\mathbf{\hat{\mathbf{L}}}}(\mathbf{B}) =
	\begin{pmatrix}
		\mathcal{G}^{\mathcal{D}_{M_1}}_{\mathbf{L}}(z\mathbf{I}_{M_1})&\mathbf{O}\\
		\mathbf{O}&*
	\end{pmatrix},
\end{equation}
where
\begin{equation}
	\mathbf{B} = 
	\begin{pmatrix}
		z\mathbf{I}_{M_1} &\mathbf{O}\\
		\mathbf{O}&\mathbf{O}_{N}
	\end{pmatrix}.
\end{equation}

Now the target is to obtain \(\mathcal{G}^{\mathcal{D}_{n}}_{\hat{\mathbf{L}}}(\mathbf{B})\). However, the one of the elements in \(\hat{\mathbf{L}}\) is the \(2\)-degree of \(\mathbf{H}_2\), which makes the problem challenging to solve. Therefore, it is necessary to apply Theorem \ref{LTP} to \(\hat{\mathbf{L}}\) again. Denote \(\mathbf{X}_i \in \mathbb{C}^{n\times n}\) as follows:
\begin{equation}
	\mathbf{X}_1 = 
	\begin{pmatrix}
		\mathbf{O} &\mathbf{H}_1\\
		\mathbf{H}_1^H &\mathbf{O}
	\end{pmatrix},
	\mathbf{X}_2 = 
	\begin{pmatrix}
		\mathbf{O}_{(n-M_2)\times M_1} &\mathbf{O}\\
		\mathbf{O} &\mathbf{H}_2
	\end{pmatrix}.
\end{equation}
Then \(\hat{\mathbf{L}}\) can be expressed as \(\hat{\mathbf{L}} = \mathbf{X}_1-\mathbf{X}_2^H\mathbf{X}_2\). Now rewrite \(\hat{\mathbf{L}}\) as
\begin{equation}
	\hat{\mathbf{L}} = -
	\begin{pmatrix}
		\mathbf{X}_2^H&\mathbf{O}&\mathbf{I}_{n}
	\end{pmatrix}
	\begin{pmatrix}
		\mathbf{I}_{n} &\mathbf{O} &\mathbf{O}\\
		\mathbf{O} &\mathbf{X}_1&-\mathbf{I}_{n}\\
		\mathbf{O} &-\mathbf{I}_{n}&\mathbf{O}
	\end{pmatrix}^{-1}
	\begin{pmatrix}
		\mathbf{X}_2\\\mathbf{O}\\\mathbf{I}_{n}
	\end{pmatrix}.
\end{equation}
By applying Theorem \ref{LTP}, \(\mathbf{J}\) can be constructed as follows:
\begin{equation}
	\mathbf{J} = 
	\begin{pmatrix}
		\mathbf{O} &\mathbf{X}_2^H&\mathbf{O}&\mathbf{I}_{n}\\
		\mathbf{X}_2 &\mathbf{I}_{n} &\mathbf{O} &\mathbf{O}\\
		\mathbf{O} &\mathbf{O} &\mathbf{X}_1&-\mathbf{I}_{n}\\
		\mathbf{I}_{n} &\mathbf{O} &-\mathbf{I}_{n}&\mathbf{O}
	\end{pmatrix}.
\end{equation}
Let
\begin{equation}
	\mathbf{C} = 
	\begin{pmatrix}
		\mathbf{B} &\mathbf{O}\\
		\mathbf{O}&\mathbf{O}_{3n}
	\end{pmatrix},
\end{equation}
then \(\mathcal{G}^{\mathcal{D}_{M_1}}_{\hat{\mathbf{L}}}(\mathbf{z\mathbf{I}_{M_1}})\) can be derived from:
\begin{equation}
	\mathcal{G}^{\mathcal{D}_{4n}}_{\mathbf{J}}(\mathbf{C}) = 
	\begin{pmatrix}
		\mathcal{G}^{\mathcal{D}_{n}}_{\hat{\mathbf{L}}}(\mathbf{B})&\mathbf{O}\\
		\mathbf{O}&*
	\end{pmatrix}
	=
	\begin{pmatrix}
		\mathcal{G}^{\mathcal{D}_{M_1}}_{\mathbf{L}}(\mathbf{z\mathbf{I}_{M_1}})&\mathbf{O}\\
		\mathbf{O}&*
	\end{pmatrix}.
\end{equation}

Note that \(\mathbf{J}\) is consisted of the \(1\)-degree of \(\mathbf{H}_1\) and \(\mathbf{H}_2\), it can be expressed as the sum of a deterministic matrix and a random hermitian matrix as \(\mathbf{J} = \bar{\mathbf{J}}+\widetilde{\mathbf{J}}\). The detailed matrices are expressed as follows:
\begin{equation}
	\begin{aligned}
		\begin{small}
			\bar{\mathbf{J}} = \begin{pmatrix}
				\mathbf{O} &\bar{\mathbf{X}}_2^H&\mathbf{O}&\mathbf{I}_{n}\\
				\bar{\mathbf{X}}_2 &\mathbf{I}_{n} &\mathbf{O} &\mathbf{O}\\
				\mathbf{O} &\mathbf{O} &\bar{\mathbf{X}}_1&-\mathbf{I}_{n}\\
				\mathbf{I}_{n} &\mathbf{O} &-\mathbf{I}_{n}&\mathbf{O}
			\end{pmatrix},
			\widetilde{\mathbf{J}} = \begin{pmatrix}
				\mathbf{O} &\widetilde{\mathbf{X}}_2^H&\mathbf{O}&\mathbf{O}\\
				\widetilde{\mathbf{X}}_2 &\mathbf{O} &\mathbf{O} &\mathbf{O}\\
				\mathbf{O} &\mathbf{O} &\widetilde{\mathbf{X}}_1&\mathbf{O}\\
				\mathbf{O} &\mathbf{O} &\mathbf{O}&\mathbf{O}
			\end{pmatrix},
		\end{small}
	\end{aligned}
\end{equation}
\begin{equation}
	\bar{\mathbf{X}}_1 = 
	\begin{pmatrix}
		\mathbf{O}&\bar{\mathbf{H}}_1\\
		\bar{\mathbf{H}}_1^H&\mathbf{O}
	\end{pmatrix},
	\widetilde{\mathbf{X}}_1 = 
	\begin{pmatrix}
		\mathbf{O}&\widetilde{\mathbf{H}}_1\\
		\widetilde{\mathbf{H}}_1^H&\mathbf{O}
	\end{pmatrix},
\end{equation}
\begin{equation}\label{X2}
	\bar{\mathbf{X}}_2 = 
	\begin{pmatrix}
		\mathbf{O}_{(n-M_2)\times M_1} &\mathbf{O}\\
		\mathbf{O} &\bar{\mathbf{H}}_2
	\end{pmatrix},
	\widetilde{\mathbf{X}}_2 = 
	\begin{pmatrix}
		\mathbf{O}_{(n-M_2)\times M_1} &\mathbf{O}\\
		\mathbf{O} &\widetilde{\mathbf{H}}_2
	\end{pmatrix}.
\end{equation}

Then the free deterministic equivalents method is applied, which is introduced in \cite{FPT,Lu}. Since each element in \(\widetilde{\mathbf{J}}\) is either zero or a Gaussian random variable with unit variance, and always zero in the diagonal, it can be replaced by a matrix consisting of freely independent centered circular elements with unit variance. This is the construction of the free deterministic equivalent of \(\mathbf{J}\). The detailed process is presented as follows. Denote \(\mathcal{A}\) as a unital algebra and \((\mathcal{A},\phi)\) as a non-commutative probability space. Define \(\widetilde{\boldsymbol{\mathcal{H}}}_i\in\mathcal{A}^{M_1\times N}\), whose entries are freely independent centered circular elements with unit variances \cite{FPT}. Set \(\boldsymbol{\mathcal{H}}_i = \bar{\mathbf{H}}_i + \widetilde{\boldsymbol{\mathcal{H}}}_i\). Then by replacing all \(\mathbf{H}_i\) with \(\boldsymbol{\mathcal{H}}_i\) in \(\mathbf{J}\), \(\boldsymbol{\mathcal{J}}\) can be constructed as the free deterministic equivalent of \(\mathbf{J}\), and \(\boldsymbol{\mathcal{J}}\) can be written as \(\boldsymbol{\mathcal{J}} = \bar{\mathbf{J}} + \widetilde{\boldsymbol{\mathcal{J}}}\), where \(\widetilde{\boldsymbol{\mathcal{J}}}\) can be constructed by replacing \(\widetilde{{\mathbf{H}}}_i\) with \(\widetilde{\boldsymbol{\mathcal{H}}}_i\) in \(\widetilde{\mathbf{J}}\). Denote \(\mathcal{M}_n(\mathcal{A})\) as the algebra of \(n\times n\) random matrices consisted with the elements in \(\mathcal{A}\). With an element \(\boldsymbol{\mathcal{X}}\in\mathcal{M}_n(\mathcal{A})\), define a map \(\mathcal{E}_{\mathcal{D}_n}:\mathcal{M}_n(\mathcal{A})\to\mathcal{D}_n\) as follows:

\begin{equation}
	\mathcal{E}_{\mathcal{D}_n}\{\boldsymbol{\mathcal{X}}\} = \text{diag}\{\phi(\mathcal{X}_{1,1}),\phi(\mathcal{X}_{2,2}),...,\phi(\mathcal{X}_{n,n})\}.
\end{equation}
It can be inferred that for any deterministic matrix \(\mathbf{Y}\in\mathcal{M}_n\), \(\mathcal{E}_{\mathcal{D}_{N}}\{\widetilde{\boldsymbol{\mathcal{H}}}_i^H\mathbf{Y}\widetilde{\boldsymbol{\mathcal{H}}}_i\} = E_{\mathcal{D}_N}\{\widetilde{\mathbf{H}}_i^H\mathbf{Y}\widetilde{\mathbf{H}}_i\}\).
Define the \(\mathcal{D}_n\)-valued Cauchy transform of \(\mathbf{X}\) as follows:
\begin{equation}
	\mathcal{G}_{\boldsymbol{\mathcal{X}}}^{\mathcal{D}_n}(\mathbf{A}) = \mathcal{E}_{\mathcal{D}_n}\{\left(\mathbf{A-\boldsymbol{\mathcal{X}}}\right)^{-1}\},
\end{equation}
where \(\mathbf{A}\in\mathcal{D}_n\) satisfies \(\frac{1}{2i}(\mathbf{A}-\mathbf{A}^H)\succ 0\). Then the following theorem is presented:

\begin{theorem}\label{FDT}
	\(\boldsymbol{\mathcal{J}}\) and \(\mathbf{J}\) satisfying:
	\begin{equation}\label{JF}
		\lim\limits_{n\to +\infty}\mathcal{G}^{\mathcal{D}_{4n}}_{\mathbf{J}}(z\mathbf{I}_n)-\mathcal{G}^{\mathcal{D}_{4n}}_{\boldsymbol{\mathcal{J}}}(z\mathbf{I}_n) = 0.
	\end{equation}
\end{theorem}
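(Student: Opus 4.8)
The plan is to show that the two $\mathcal{D}_{4n}$-valued Cauchy transforms satisfy one and the same self-consistent (matrix Dyson) equation --- exactly on the free side and up to a vanishing remainder on the random-matrix side --- and then to use the stability of that equation to force the two transforms together. The crucial structural input, already recorded above, is the covariance matching $\mathcal{E}_{\mathcal{D}_N}\{\widetilde{\boldsymbol{\mathcal{H}}}_i^H\mathbf{Y}\widetilde{\boldsymbol{\mathcal{H}}}_i\} = E_{\mathcal{D}_N}\{\widetilde{\mathbf{H}}_i^H\mathbf{Y}\widetilde{\mathbf{H}}_i\}$, valid for every deterministic $\mathbf{Y}$. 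Because the blocks of $\widetilde{\mathbf{J}}$ and $\widetilde{\boldsymbol{\mathcal{J}}}$ are built from $\widetilde{\mathbf{H}}_i$ and $\widetilde{\boldsymbol{\mathcal{H}}}_i$ respectively, this identity says that the completely positive covariance map
\[
\eta(\mathbf{D}) = \mathcal{E}_{\mathcal{D}_{4n}}\{\widetilde{\boldsymbol{\mathcal{J}}}\,\mathbf{D}\,\widetilde{\boldsymbol{\mathcal{J}}}\} = E_{\mathcal{D}_{4n}}\{\widetilde{\mathbf{J}}\,\mathbf{D}\,\widetilde{\mathbf{J}}\},\qquad \mathbf{D}\in\mathcal{D}_{4n},
\]
is \emph{identical} for the two models, which is precisely what legitimizes replacing the Gaussian blocks by free circular elements. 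I would then introduce the single fixed-point map $\Phi_z(\mathbf{D}) = E_{\mathcal{D}_{4n}}\{(z\mathbf{I} - \bar{\mathbf{J}} - \eta(\mathbf{D}))^{-1}\}$ on $\mathcal{D}_{4n}$, noting that its argument is deterministic (the diagonal $\eta(\mathbf{D})$ plus the constant full matrix $\bar{\mathbf{J}}$), so $\mathcal{E}_{\mathcal{D}_{4n}}$ and $E_{\mathcal{D}_{4n}}$ agree there.

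On the free side the equation holds without error. Since the entries of $\widetilde{\boldsymbol{\mathcal{H}}}_i$ are freely independent centered circular elements, $\widetilde{\boldsymbol{\mathcal{J}}}$ is an operator-valued semicircular element over $\mathcal{D}_{4n}$ with covariance $\eta$, freely independent from the constant $\bar{\mathbf{J}}$. The operator-valued subordination theorem for a semicircular perturbation of $\bar{\mathbf{J}}$ then yields
\[
\mathcal{G}^{\mathcal{D}_{4n}}_{\boldsymbol{\mathcal{J}}}(z\mathbf{I}) = \Phi_z\!\left(\mathcal{G}^{\mathcal{D}_{4n}}_{\boldsymbol{\mathcal{J}}}(z\mathbf{I})\right),
\]
and this characterizes $\mathcal{G}^{\mathcal{D}_{4n}}_{\boldsymbol{\mathcal{J}}}$ uniquely among functions mapping the domain $\frac{1}{2i}(\mathbf{A}-\mathbf{A}^H)\succ 0$ into itself.

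The core of the argument, and the step I expect to be hardest, is to derive the \emph{same} equation on the random-matrix side with only a vanishing defect. Writing $\mathbf{R} = (z\mathbf{I} - \mathbf{J})^{-1}$, I would apply the Gaussian integration-by-parts (Stein) identity to $E_{\mathcal{D}_{4n}}\{\widetilde{\mathbf{J}}\mathbf{R}\}$, differentiating $\mathbf{R}$ in the independent complex-Gaussian entries of $\widetilde{\mathbf{J}}$; the quadratic (second-cumulant) contribution reassembles exactly into the self-energy $\eta(E_{\mathcal{D}_{4n}}\{\mathbf{R}\})$, while the off-diagonal support and centering of $\widetilde{\mathbf{J}}$ kill the first-order term. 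This gives
\[
\mathcal{G}^{\mathcal{D}_{4n}}_{\mathbf{J}}(z\mathbf{I}) = \Phi_z\!\left(\mathcal{G}^{\mathcal{D}_{4n}}_{\mathbf{J}}(z\mathbf{I})\right) + \boldsymbol{\Delta}_n,
\]
where the remainder $\boldsymbol{\Delta}_n$ collects the higher-cumulant terms together with the fluctuation of $\mathbf{R}$ about its mean. The difficulty is genuinely in controlling $\boldsymbol{\Delta}_n$: one must bound the resolvent uniformly (available because the admissible $z$ stays at positive distance from the real spectrum of the self-adjoint $\mathbf{J}$, and because $\mathbf{L}$ is positive so the relevant block is well separated for $\mathrm{Re}(z)<0$), and then show, via a variance (concentration) estimate for Lipschitz functions of Gaussian vectors, that the fluctuation and higher-order terms are $o(1)$ entrywise. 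This entrywise ($\mathcal{D}_{4n}$-valued) control, rather than a mere normalized-trace statement, is the delicate point; it is exactly the content of the free deterministic equivalent theorem invoked from \cite{FDEo,Lu}, whose hypotheses (self-adjointness of $\mathbf{J}$ and affine dependence of its entries on the Gaussian matrices, with matched covariance) are met here because the two linearizations removed all negative-degree occurrences of $\mathbf{H}_i$.

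Finally I would close by stability. On the admissible domain the map $\Phi_z$ is a strict contraction in the hyperbolic metric on the matrix upper half-plane --- a consequence of the complete positivity of $\eta$ and of a uniform lower bound on the imaginary parts. Subtracting the exact identity for $\mathcal{G}^{\mathcal{D}_{4n}}_{\boldsymbol{\mathcal{J}}}$ from the approximate one for $\mathcal{G}^{\mathcal{D}_{4n}}_{\mathbf{J}}$ and using the contraction gives $\|\mathcal{G}^{\mathcal{D}_{4n}}_{\mathbf{J}}(z\mathbf{I}) - \mathcal{G}^{\mathcal{D}_{4n}}_{\boldsymbol{\mathcal{J}}}(z\mathbf{I})\| \le C\,\|\boldsymbol{\Delta}_n\| \to 0$ as $n\to\infty$, which is exactly \eqref{JF}.
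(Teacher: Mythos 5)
Your proposal is correct and is essentially the paper's approach: the paper's entire proof of this theorem is the observation that, after the two linearizations, $\mathbf{J}$ is self-adjoint with entries affine in the Gaussian matrices and with covariance matched by the circular replacements, so the statement is a special case of the free deterministic equivalent theorem in the cited reference (Appendix A-B of the Lu reference), which is invoked rather than re-proved. The machinery you sketch --- the identical covariance map $\eta$, the matrix Dyson equation obtained by operator-valued subordination on the free side and Gaussian integration by parts plus concentration on the random-matrix side, and stability of the fixed-point equation to absorb the vanishing defect --- is precisely the internal content of that cited theorem, so your outline and the paper differ only in that you unpack what the citation contains while deferring the same delicate entrywise estimates.
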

\begin{proof}
	See Appendix A-B in \cite{Lu}, and this theorem is a special case of the one in \cite{Lu}. 
\end{proof}

Then by deriving \(\mathcal{G}^{\mathcal{D}_{4n}}_{\boldsymbol{\mathcal{J}}}(z\mathbf{I}_n)\), \(\mathcal{G}^{\mathcal{D}_{4n}}_{\mathbf{J}}(\mathbf{C})\) can be obtained. Before that, a Lemma is required and introduced:
\begin{lemma}\label{ED}
	Assume \(\mathbf{X}\in\mathbb{C}^{m\times n}\) is a Gaussian random matrix with mean \(\mathbf{O}\) and covariance matrix \(\mathbf{I}\), \(\mathbf{Y}\in\mathbb{C}^{n\times n}\) is a deterministic diagonal matrix independent from \(\mathbf{X}\), then 
	\begin{equation}
		E_{\mathcal{D}_{m}}\{\mathbf{X}\mathbf{Y}\mathbf{X}^H\} = \text{Tr}\{\mathbf{Y}\}\mathbf{I}_m.
	\end{equation}
\end{lemma}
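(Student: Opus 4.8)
The plan is to compute the diagonal entries of \(\mathbf{X}\mathbf{Y}\mathbf{X}^H\) directly and then take expectations, since by its very definition the map \(E_{\mathcal{D}_m}\) retains only the (expected) diagonal of its argument. Hence the off-diagonal entries are irrelevant, and it suffices to evaluate \(E\{(\mathbf{X}\mathbf{Y}\mathbf{X}^H)_{i,i}\}\) for each row index \(i\).

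First I would expand the \((i,i)\) entry. Writing \(\mathbf{Y} = \text{diag}\{y_1,\ldots,y_n\}\), the rule for matrix products gives
\begin{equation}
	(\mathbf{X}\mathbf{Y}\mathbf{X}^H)_{i,i} = \sum_{k=1}^{n}\sum_{l=1}^{n} X_{i,k} Y_{k,l} \overline{X_{i,l}} = \sum_{k=1}^{n} y_k |X_{i,k}|^2,
\end{equation}
where the second equality uses that \(\mathbf{Y}\) is diagonal, so only the \(k=l\) terms survive.

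Next I would take the expectation and invoke the distributional assumption on \(\mathbf{X}\). Because the entries are zero-mean complex Gaussians with covariance \(\mathbf{I}\), they satisfy \(E\{|X_{i,k}|^2\} = 1\), so that
\begin{equation}
	E\{(\mathbf{X}\mathbf{Y}\mathbf{X}^H)_{i,i}\} = \sum_{k=1}^{n} y_k\, E\{|X_{i,k}|^2\} = \sum_{k=1}^{n} y_k = \text{Tr}\{\mathbf{Y}\}.
\end{equation}
This value does not depend on \(i\), so every diagonal entry equals \(\text{Tr}\{\mathbf{Y}\}\), and assembling them yields \(E_{\mathcal{D}_m}\{\mathbf{X}\mathbf{Y}\mathbf{X}^H\} = \text{Tr}\{\mathbf{Y}\}\mathbf{I}_m\), as claimed.

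There is no genuinely hard step: the conclusion follows at once from the diagonal structure of \(\mathbf{Y}\) and the unit-variance normalization of the Gaussian entries. The only point deserving care is the covariance convention, so I would make explicit that "covariance matrix \(\mathbf{I}\)" is read as \(E\{X_{i,k}\overline{X_{j,l}}\} = \delta_{ij}\delta_{kl}\), giving \(E\{|X_{i,k}|^2\}=1\); a different per-entry variance would merely rescale the answer. I would also remark that the diagonality of \(\mathbf{Y}\) is convenient but inessential, since for a general \(\mathbf{Y}\) the cross terms \(E\{X_{i,k}\overline{X_{i,l}}\}=\delta_{kl}\) again select only the diagonal of \(\mathbf{Y}\) and reproduce the same result.
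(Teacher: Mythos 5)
Your proof is correct and is essentially the same argument as the paper's: both reduce \(E_{\mathcal{D}_m}\{\mathbf{X}\mathbf{Y}\mathbf{X}^H\}\) to the expected diagonal entries, use the diagonality of \(\mathbf{Y}\) to kill cross terms, and invoke the unit variance \(E\{|X_{i,k}|^2\}=1\) to get \(\mathrm{Tr}\{\mathbf{Y}\}\mathbf{I}_m\). The only cosmetic difference is that the paper organizes the computation by columns, writing \(\mathbf{X}\mathbf{Y}\mathbf{X}^H=\sum_{k} y_k\mathbf{x}_k\mathbf{x}_k^H\) with \(E_{\mathcal{D}_m}\{\mathbf{x}_k\mathbf{x}_k^H\}=\mathbf{I}_m\), whereas you expand entrywise.
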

\begin{proof}
	See Appendix \ref{EDA}.
\end{proof}

Then the approach to derive \(\mathcal{G}^{\mathcal{D}_{4n}}_{\boldsymbol{\mathcal{J}}}(z\mathbf{I}_n)\)  can be shown as follows:

\begin{theorem}\label{IT}
	Divide \(\mathcal{G}^{\mathcal{D}_{4n}}_{\boldsymbol{\mathcal{J}}}(\mathbf{C})\) into several block matrices as follows:
	
	Denote \(\mathcal{G}^{\mathcal{D}_{4n}}_{\boldsymbol{\mathcal{J}}}(\mathbf{C}) = \text{blkdiag}\{\mathbf{E}_1,\mathbf{E}_2,\mathbf{E}_3,\mathbf{E}_4\}\), and \(\mathbf{E}_i = \text{blkdiag}\{\mathbf{E}_{i,1},\mathbf{E}_{i,2}\}\), where \(\mathbf{E}_{1,1}\in\mathbb{C}^{M_1\times M_1}\), \(\mathbf{E}_{1,2}\in\mathbb{C}^{N\times N}\), \(\mathbf{E}_{2,1}\in\mathbb{C}^{n-M_2\times n-M_2}\), \(\mathbf{E}_{2,2}\in\mathbb{C}^{M_2\times M_2}\), \(\mathbf{E}_{3,1}\in\mathbb{C}^{M_1\times M_1}\), and \(\mathbf{E}_{3,2}\in\mathbb{C}^{N\times N}\) are diagonal matrices. Then \(\mathbf{E}_{i,j}\) satisfy the following equations:
	\begin{equation}\label{E11}
		\begin{aligned}
			\mathbf{E}_{1,1}  =  \text{diag}\{((z-\text{Tr}\{\mathbf{E}_{1,2}\})\mathbf{I}_{M_1}
			&-\bar{\mathbf{H}}_1\mathbf{A}_1^{-1}\bar{\mathbf{H}}_1^H)^{-1}\},
		\end{aligned}
	\end{equation}
	\begin{equation}\label{E12}
		\begin{aligned}
			\mathbf{E}_{1,2} =  \text{diag}\{(\mathbf{A}_1-(z-\text{Tr}\{\mathbf{E}_{1,2}\})^{-1}\bar{\mathbf{H}}_1^H\bar{\mathbf{H}}_1)^{-1}\},
		\end{aligned}
	\end{equation}
	\begin{equation}\label{E22}
		\begin{aligned}
			\mathbf{E}_{2,2} = \text{diag}
			\{
			(\bar{\mathbf{H}}_2\mathbf{A}_2^{-1}\bar{\mathbf{H}}_2^H-(1+\text{Tr}\{\mathbf{E}_{1,2}\})\mathbf{I}_{M_2})^{-1}
			\},
		\end{aligned}
	\end{equation}
	
	where
	\begin{equation}\label{A1}
		\mathbf{A}_{1} = (1+\text{Tr}\{\mathbf{E}_{1,2}\})^{-1}\bar{\mathbf{H}}_2^H\bar{\mathbf{H}}_2-(\text{Tr}\{\mathbf{E}_{2,2}\}+\text{Tr}\{\mathbf{E}_{1,1}\})\mathbf{I}_N,
	\end{equation}
	\begin{equation}\label{A2}
		\mathbf{A}_{2} = (z-\text{Tr}\{\mathbf{E}_{1,2}\})^{-1}\bar{\mathbf{H}}_1^H\bar{\mathbf{H}}_1+(\text{Tr}\{\mathbf{E}_{2,2}\}+\text{Tr}\{\mathbf{E}_{1,1}\})\mathbf{I}_N.
	\end{equation}
\end{theorem}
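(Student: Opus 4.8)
The plan is to obtain $\mathbf{G}:=\mathcal{G}^{\mathcal{D}_{4n}}_{\boldsymbol{\mathcal{J}}}(\mathbf{C})$ directly from its definition $\mathbf{G}=\mathcal{E}_{\mathcal{D}_{4n}}\{(\mathbf{C}-\boldsymbol{\mathcal{J}})^{-1}\}$, exploiting the decomposition $\boldsymbol{\mathcal{J}}=\bar{\mathbf{J}}+\widetilde{\boldsymbol{\mathcal{J}}}$ into a deterministic part and a self-adjoint part built from the freely independent circular elements $\widetilde{\boldsymbol{\mathcal{H}}}_i$. Since $\widetilde{\boldsymbol{\mathcal{J}}}$ is a $\mathcal{D}_{4n}$-valued semicircular element (its off-diagonal circular entries being freely independent) and $\bar{\mathbf{J}}$ is a constant matrix, its $R$-transform is linear and equals its covariance map, so the operator-valued subordination formula (see \cite{FPT,Lu}) collapses to the single fixed-point (Dyson) equation
\begin{equation*}
\mathbf{G}=\mathcal{E}_{\mathcal{D}_{4n}}\left\{\left(\mathbf{C}-\bar{\mathbf{J}}-\boldsymbol{\eta}(\mathbf{G})\right)^{-1}\right\},\qquad \boldsymbol{\eta}(\mathbf{D})=\mathcal{E}_{\mathcal{D}_{4n}}\{\widetilde{\boldsymbol{\mathcal{J}}}\,\mathbf{D}\,\widetilde{\boldsymbol{\mathcal{J}}}\},
\end{equation*}
for $\mathbf{D}\in\mathcal{D}_{4n}$. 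This reduces the theorem to two tasks: evaluating the covariance map $\boldsymbol{\eta}$ and carrying out the block inversion on the right-hand side.

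First I would establish the block-diagonal ansatz. Because $\mathbf{C}$ is diagonal, $\boldsymbol{\eta}(\mathbf{D})$ is diagonal whenever $\mathbf{D}$ is, and $\bar{\mathbf{J}}$ couples the four size-$n$ super-blocks only through $\bar{\mathbf{X}}_1,\bar{\mathbf{X}}_2$ and the $\pm\mathbf{I}_n$ blocks, so $\mathcal{E}_{\mathcal{D}_{4n}}$ annihilates the off-diagonal part of the resolvent and the fixed-point map preserves the partition $\mathbf{G}=\text{blkdiag}\{\mathbf{E}_1,\mathbf{E}_2,\mathbf{E}_3,\mathbf{E}_4\}$ with each $\mathbf{E}_i$ itself block diagonal. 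With this ansatz I would compute $\boldsymbol{\eta}(\mathbf{G})$ super-block by super-block: the only nonzero diagonal super-blocks of $\widetilde{\boldsymbol{\mathcal{J}}}\,\mathbf{G}\,\widetilde{\boldsymbol{\mathcal{J}}}$ are $\widetilde{\boldsymbol{\mathcal{X}}}_2^H\mathbf{E}_2\widetilde{\boldsymbol{\mathcal{X}}}_2$, $\widetilde{\boldsymbol{\mathcal{X}}}_2\mathbf{E}_1\widetilde{\boldsymbol{\mathcal{X}}}_2^H$ and $\widetilde{\boldsymbol{\mathcal{X}}}_1\mathbf{E}_3\widetilde{\boldsymbol{\mathcal{X}}}_1$, each of which reduces to terms of the form $\widetilde{\boldsymbol{\mathcal{H}}}_i^H\mathbf{E}_{\cdot,\cdot}\widetilde{\boldsymbol{\mathcal{H}}}_i$ or $\widetilde{\boldsymbol{\mathcal{H}}}_i\mathbf{E}_{\cdot,\cdot}\widetilde{\boldsymbol{\mathcal{H}}}_i^H$. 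Applying Lemma \ref{ED} turns every such block into $\text{Tr}\{\mathbf{E}_{\cdot,\cdot}\}$ times an identity; this is exactly the mechanism that generates the scalar trace terms in \eqref{E11}--\eqref{A2}.

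With $\boldsymbol{\eta}(\mathbf{G})$ reduced to block scalars, the matrix $\mathbf{C}-\bar{\mathbf{J}}-\boldsymbol{\eta}(\mathbf{G})$ becomes an explicit structured $4n\times4n$ matrix whose diagonal blocks are (scalar)$\,\mathbf{I}$ and whose only genuine couplings are $\bar{\mathbf{H}}_1,\bar{\mathbf{H}}_2$ inside $\bar{\mathbf{X}}_1,\bar{\mathbf{X}}_2$ together with the identity blocks introduced by the two nested linearizations. I would invert it by repeated Schur complementation: eliminating the blocks attached to $\bar{\mathbf{X}}_1$ and $\bar{\mathbf{X}}_2$ produces precisely the Schur complements $\mathbf{A}_1$ and $\mathbf{A}_2$ of \eqref{A1}--\eqref{A2}, and reading off the surviving diagonal blocks of the inverse yields \eqref{E11}, \eqref{E12} and \eqref{E22} after applying $\mathcal{E}_{\mathcal{D}_{4n}}$ (which merely retains the diagonal). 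The blocks $\mathbf{E}_3,\mathbf{E}_4$ are either eliminated in the process or slaved to $\mathbf{E}_1,\mathbf{E}_2$ through the $\pm\mathbf{I}_n$ couplings, so they need not be reported and the system closes in $\mathbf{E}_{1,1},\mathbf{E}_{1,2},\mathbf{E}_{2,2}$.

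I expect the main obstacle to be the bookkeeping of this reduction rather than any single hard estimate. The two linearizations leave several $\pm\mathbf{I}_n$ coupling blocks, and these must be eliminated in the right order so that the self-consistent traces land in the correct diagonal blocks; in particular the term $\text{Tr}\{\mathbf{E}_{1,1}\}$ appearing in $\mathbf{A}_1,\mathbf{A}_2$ only emerges once the super-blocks linked by the identity couplings are fused, which identifies the relevant trace of $\mathbf{E}_3$ with that of $\mathbf{E}_1$. A secondary subtlety is rigorously justifying the block-diagonal ansatz, after which everything is a deterministic matrix computation relying solely on Lemma \ref{ED} and the Schur complement formula.
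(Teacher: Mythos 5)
Your proposal is correct and follows essentially the same route as the paper: the operator-valued subordination formula with the semicircular $R$-transform reduced to the covariance map $\mathbf{D}\mapsto\mathcal{E}_{\mathcal{D}_{4n}}\{\widetilde{\boldsymbol{\mathcal{J}}}\mathbf{D}\widetilde{\boldsymbol{\mathcal{J}}}\}$ (your Dyson fixed-point equation is exactly the paper's subordination identity), the same identification of the three nonzero covariance blocks $\widetilde{\boldsymbol{\mathcal{X}}}_2^H\mathbf{E}_2\widetilde{\boldsymbol{\mathcal{X}}}_2$, $\widetilde{\boldsymbol{\mathcal{X}}}_2\mathbf{E}_1\widetilde{\boldsymbol{\mathcal{X}}}_2^H$, $\widetilde{\boldsymbol{\mathcal{X}}}_1\mathbf{E}_3\widetilde{\boldsymbol{\mathcal{X}}}_1$ reduced to traces via Lemma \ref{ED}, and the same block/Schur-complement inversion producing $\mathbf{A}_1$, $\mathbf{A}_2$ and the identification $\mathbf{E}_3=\mathbf{E}_1$. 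The only difference is presentational: you flag the block-diagonal ansatz as needing justification, whereas the paper simply assumes it.
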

\begin{proof}
	See Appendix \ref{ITA}.
\end{proof}

From the above theorem, the numerical results of \(\mathbf{E}_{i,j}\) with a certain input \(z_0\) can be obtained with iteration. Then the asymptotic result of \(G_{\mathbf{L}}(z_0)\) can be obtained by
\begin{equation}\label{GL}
	G_{\mathbf{L}}(z_0) \approx \frac{1}{M_1}\text{Tr}\{\mathbf{E}_{1,1}\}.
\end{equation}
The accuracy improves when \(n\) increases, i.e., when the number of antennas increases. Therefore, the approach provides more accurate results in large-scale MIMO scenarios. 

To obtain \(G_{\omega}(z_0)\), the following theorem is introduced:
\begin{theorem}\label{Gw}
	\(G_{\omega}(z_0)\) can be expressed as follows:
	\begin{equation}\label{GwE}
		G_{\omega}(z_0) = \frac{M_1}{S}G_{\mathbf{L}}(z_0)-\frac{M_1-S}{Sz_0}.
	\end{equation}
\end{theorem}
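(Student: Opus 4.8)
The plan is to avoid any new free-probability machinery and instead derive (\ref{GwE}) as a purely spectral identity, exploiting the fact recalled from \cite{Chen} that the GSVs $\omega_j$ are exactly the non-zero eigenvalues of the $M_1\times M_1$ matrix $\mathbf{L}$. The conceptual core is that the (empirical, and hence limiting) spectral measure of $\mathbf{L}$ splits as a convex combination
\begin{equation}
	F_{\mathbf{L}} = \frac{S}{M_1}F_{\omega} + \frac{M_1-S}{M_1}\delta_0,
\end{equation}
where $\delta_0$ is the unit point mass at the origin carrying all the zero eigenvalues of $\mathbf{L}$, and $F_{\omega}$ is the distribution of the $S$ non-zero eigenvalues, normalized by $S$. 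Since the Cauchy transform is linear in the underlying measure and $G_{\delta_0}(z)=1/z$, applying $G$ termwise and solving for $G_{\omega}$ will give the claim.

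First I would fix the eigenvalue count of $\mathbf{L}$, which is what justifies the weight $(M_1-S)/M_1$ on the point mass. Because $M_2\geq N$ forces $\mathbf{H}_2$ to have full column rank, $(\mathbf{H}_2^H\mathbf{H}_2)^{-1}$ is positive definite, and the congruence $\mathbf{H}_1(\cdot)\mathbf{H}_1^H$ preserves rank; hence $\operatorname{rank}(\mathbf{L})=\operatorname{rank}(\mathbf{H}_1)=\min\{M_1,N\}$ almost surely, the last equality holding because $\mathbf{H}_1$ is a Rician (absolutely continuous) random matrix and thus attains full rank with probability one. In the regime $M_2\geq N$ the defining formula $S=\min\{M_1,N\}+\min\{M_2,N\}-\min\{M_1+M_2,N\}$ collapses, using $M_1+M_2>N$, to $S=\min\{M_1,N\}$. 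Therefore $\mathbf{L}$ has exactly $S$ non-zero eigenvalues—the $\omega_j$—and a zero eigenvalue of multiplicity $M_1-S$.

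Then I would write the Cauchy transform of $\mathbf{L}$ by splitting the spectrum into its non-zero and zero parts and recognizing the non-zero part as $G_{\omega}$:
\begin{equation}
	G_{\mathbf{L}}(z) = \frac{1}{M_1}\sum_{j=1}^{S}\frac{1}{z-\omega_j} + \frac{M_1-S}{M_1 z} = \frac{S}{M_1}G_{\omega}(z) + \frac{M_1-S}{M_1}\frac{1}{z}.
\end{equation}
Rearranging this single linear relation for $G_{\omega}(z)$ reproduces (\ref{GwE}) verbatim, evaluated at the chosen point $z_0$. The computation is elementary bookkeeping, and the only step demanding genuine care is the multiplicity argument of the preceding paragraph: one must verify that the zero eigenvalue of $\mathbf{L}$ has multiplicity \emph{exactly} $M_1-S$, neither more (which would require $\mathbf{H}_1$ to be rank-deficient, a measure-zero event) nor fewer (ruled out since $\mathbf{L}$ cannot have rank exceeding that of $\mathbf{H}_1$). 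Once this is pinned down, the normalization mismatch between $G_{\mathbf{L}}$ (divided by $M_1$) and $G_{\omega}$ (divided by $S$) is resolved automatically and the identity follows.
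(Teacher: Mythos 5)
Your proposal is correct and follows essentially the same route as the paper's own proof: both decompose the spectral measure of $\mathbf{L}$ into the $S$ non-zero eigenvalues (the GSVs, weight $S/M_1$) and the zero eigenvalue (weight $(M_1-S)/M_1$), use linearity of the Cauchy transform together with $G_{\delta_0}(z)=1/z$, and solve the resulting linear relation for $G_{\omega}(z_0)$. The only difference is that you explicitly justify the multiplicity count via the rank identity $\operatorname{rank}(\mathbf{L})=\operatorname{rank}(\mathbf{H}_1)=\min\{M_1,N\}=S$ almost surely, a point the paper leaves implicit by citing \cite{Chen}.
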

\begin{proof}
	See Appendix \ref{GwA}.
\end{proof}

\subsubsection{The case \(M_2 < N < M_1 + M_2\)}

When \(M_2 < N <M_1+M_2\), it is difficult to find a matrix that is constructed with \(\mathbf{H}_1\) and \(\mathbf{H}_2\) which has the same eigenvalues as \(\omega\). Therefore, we add rows to the bottoms of \(\mathbf{H}_2\) to transfer this case to the \(M_2 = N\) case. Before that, an important theorem is introduced:
\begin{theorem}\label{ellipse}
	For two matrices \(\mathbf{H}_1\in\mathbb{C}^{M_1\times N}\), \(\mathbf{H}_2\in\mathbb{C}^{M_2\times N}\), \(M_2 < N < M_1 + M_2\). \(S = M_1+M_2-N\). Construct a new matrix
	\begin{equation}\label{FR}
		\mathbf{H}_3 = 
		\begin{pmatrix}
			\mathbf{H}_2\\
			\epsilon\mathbf{F}
		\end{pmatrix},
	\end{equation}
	where \(\mathbf{F}\in\mathbb{C}^{N-M_2\times N}\) and satisfies \(\text{rank}\{\mathbf{H}_3\} = N\). \(\epsilon>0\) is an approximation parameter. Assume the GSVs of pairs \(\{\mathbf{H}_1,\mathbf{H}_3\}\) is sorted as from the smallest to the largest \(\mu_1<\mu_2<...<\mu_{M_1}\). Then when \(\epsilon\to 0\), \(\mu_1,\mu_2,...,\mu_S\) continuously converges to the GSVs of pairs \(\{\mathbf{H}_1,\mathbf{H}_2\}\), and \(\mu_{S+1},\mu_{S+2},...,\mu_{M_1}\to+\infty\), respectively.
\end{theorem}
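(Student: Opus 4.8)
The plan is to convert the generalized-eigenvalue statement into a perturbation analysis of an ordinary Hermitian eigenvalue problem and then track the eigenvalues as \(\epsilon\downarrow 0\). Since \(\mathbf{H}_3\) has full column rank \(N\) and \(M_1<N\), the earlier reduction (the \(M_2\geq N\) case applied to \(\{\mathbf{H}_1,\mathbf{H}_3\}\)) shows that the \(M_1\) GSVs \(\mu_j(\epsilon)\) are exactly the eigenvalues of \(\mathbf{L}_3(\epsilon)=\mathbf{H}_1\mathbf{M}(\epsilon)^{-1}\mathbf{H}_1^H\), where \(\mathbf{M}(\epsilon)=\mathbf{H}_3^H\mathbf{H}_3=\mathbf{H}_2^H\mathbf{H}_2+\epsilon^2\mathbf{F}^H\mathbf{F}\succ 0\). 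Equivalently, through the generalized Rayleigh quotient \(\mathrm{RQ}(\mathbf{v})=\|\mathbf{H}_1\mathbf{v}\|^2/\big(\|\mathbf{H}_2\mathbf{v}\|^2+\epsilon^2\|\mathbf{F}\mathbf{v}\|^2\big)\), the \(\mu_j(\epsilon)\) are the \(M_1\) positive eigenvalues of the Hermitian pencil \((\mathbf{H}_1^H\mathbf{H}_1,\mathbf{M}(\epsilon))\) on \(\mathbb{C}^N\), i.e.\ \(\mu_j(\epsilon)=\lambda_j\big(\mathbf{M}(\epsilon)^{-1/2}\mathbf{H}_1^H\mathbf{H}_1\mathbf{M}(\epsilon)^{-1/2}\big)\), the remaining \(N-M_1\) eigenvalues being identically zero.

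First I would establish existence of the limits. Because \(\mathbf{M}(\epsilon)\) is nondecreasing in \(\epsilon\) in the positive-semidefinite order, \(\mathbf{L}_3(\epsilon)\) is nonincreasing, so by Weyl's monotonicity each ordered eigenvalue \(\mu_j(\epsilon)\) is nonincreasing and continuous in \(\epsilon\) on \((0,\infty)\). Hence, as \(\epsilon\downarrow 0\), every \(\mu_j(\epsilon)\) increases monotonically and continuously to some limit \(\mu_j^\star\in(0,+\infty]\); this is the ``continuous convergence'' asserted. Next I would split the bounded eigenvalues from the divergent ones by Courant--Fischer. Taking the test subspace \(\mathcal{K}=\ker\mathbf{H}_2\) (dimension \(N-M_2\)), on which \(\mathbf{H}_2\mathbf{v}=0\) and \(\mathbf{H}_1\) is injective (generically \(\ker\mathbf{H}_1\cap\ker\mathbf{H}_2=\{0\}\) as \(N<M_1+M_2\)), one gets \(\mathrm{RQ}(\mathbf{v})\geq c/(\epsilon^2\|\mathbf{F}\|^2)\) for some \(c>0\); the max--min characterization then forces the \(N-M_2\) largest eigenvalues \(\mu_{S+1},\dots,\mu_{M_1}\) to diverge. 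Taking instead \(\mathcal{R}=\mathrm{range}(\mathbf{H}_2^H)=\mathcal{K}^\perp\) (dimension \(M_2\)), where \(\mathbf{v}^H\mathbf{M}(\epsilon)\mathbf{v}\geq \lambda_{\min}^{+}(\mathbf{H}_2^H\mathbf{H}_2)\|\mathbf{v}\|^2\), the min--max bound gives \(\mu_S(\epsilon)\leq \|\mathbf{H}_1\|^2/\lambda_{\min}^{+}(\mathbf{H}_2^H\mathbf{H}_2)\) uniformly in \(\epsilon\), so \(\mu_1,\dots,\mu_S\) stay bounded and converge to finite positive limits.

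Finally I would identify these \(S\) finite limits with the GSVs of \(\{\mathbf{H}_1,\mathbf{H}_2\}\) via the characteristic polynomial \(p_\epsilon(\mu)=\det(\mathbf{H}_1^H\mathbf{H}_1-\mu\mathbf{M}(\epsilon))\). Its coefficients converge to those of \(p_0(\mu)=\det(\mathbf{H}_1^H\mathbf{H}_1-\mu\mathbf{H}_2^H\mathbf{H}_2)\), which has degree \(M_2\) since \(\mathrm{rank}(\mathbf{H}_2^H\mathbf{H}_2)=M_2\); the loss of \(N-M_2\) degrees is precisely the escape of \(N-M_2\) roots to infinity found above, while the bounded roots converge to the roots of \(p_0\). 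Substituting the GSVD \(\mathbf{H}_i=\mathbf{U}_i^H\mathbf{\Sigma}_i\mathbf{V}^{-1}\) factors \(p_0\) as \(|\det\mathbf{V}|^{-2}\prod_j(\sigma_{1,j}^2-\mu\sigma_{2,j}^2)\), whose nonzero roots are exactly the \(S\) ratios \(\sigma_{1,j}^2/\sigma_{2,j}^2\), i.e.\ the GSVs of \(\{\mathbf{H}_1,\mathbf{H}_2\}\); the factors with \(\sigma_{1,j}=0\) account for the persistent zero eigenvalues, and those with \(\sigma_{2,j}=0\) for the escaping roots. Notably \(p_0\) does not involve \(\mathbf{F}\), so the finite limits are intrinsic, as the statement requires.

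The hard part will be this last step: because the limiting pencil \((\mathbf{H}_1^H\mathbf{H}_1,\mathbf{H}_2^H\mathbf{H}_2)\) is singular, I must argue carefully that the degree drop of \(p_\epsilon\to p_0\) matches the eigenvalue escape and that the \(S\) bounded roots pair off one-to-one (with multiplicity) against the nonzero roots of \(p_0\), rather than drifting toward the \(\mu=0\) cluster. The monotonicity of the second step, which keeps \(\mu_1(\epsilon)\) bounded away from \(0\), is exactly what excludes the latter and closes the matching. Throughout I would invoke the genericity of the Rician channel matrices (full ranks and transversal kernels almost surely) to guarantee \(\ker\mathbf{H}_1\cap\ker\mathbf{H}_2=\{0\}\) and the exact rank counts on which the index bookkeeping depends.
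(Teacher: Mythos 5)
Your proposal is correct, but there is nothing in the paper to compare it against step by step: the paper's entire ``proof'' of Theorem~\ref{ellipse} is a citation to the external reference \cite{Ellipse}, with no internal argument. Your route is therefore a genuine, self-contained alternative. You reduce the GSVs to the eigenvalues of the definite Hermitian pencil \((\mathbf{H}_1^H\mathbf{H}_1,\mathbf{M}(\epsilon))\) with \(\mathbf{M}(\epsilon)=\mathbf{H}_2^H\mathbf{H}_2+\epsilon^2\mathbf{F}^H\mathbf{F}\), use Loewner monotonicity of \(\mathbf{M}(\epsilon)\) plus Weyl to obtain monotone, continuous eigenvalue limits, separate the bounded branches from the divergent ones by Courant--Fischer with the test spaces \(\ker\mathbf{H}_2\) and \(\mathrm{range}(\mathbf{H}_2^H)\), and identify the \(S\) finite limits through the limit polynomial \(p_0\), whose GSVD factorization exhibits its nonzero roots as exactly the GSVs of \(\{\mathbf{H}_1,\mathbf{H}_2\}\). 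The index bookkeeping checks out: \(\mu_{S+1}\) is the \((N-M_2)\)-th largest pencil eigenvalue and \(\mu_S\) the \(M_2\)-th smallest, so the max--min and min--max bounds hit precisely the claimed split, and your argument even yields more than the cited result is invoked for --- the convergence is monotone, the escaping GSVs diverge at the explicit rate \(\Theta(\epsilon^{-2})\), and the finite limits are manifestly independent of \(\mathbf{F}\). Two points deserve care in a full write-up. First, the one-to-one pairing (with multiplicity) of the \(S\) bounded roots with the nonzero roots of \(p_0\) is not closed by monotonicity alone: that bound excludes drift into the zero cluster, but it does not by itself prevent two bounded branches from collapsing onto one root of \(p_0\) while another root is missed; a Hurwitz/Rouch\'e root count of \(p_\epsilon\) in small disks around each nonzero root of \(p_0\) finishes this cleanly, since coefficientwise convergence of polynomials of bounded degree gives locally uniform convergence. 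Second, your standing genericity assumptions (full row ranks of \(\mathbf{H}_1,\mathbf{H}_2\) and \(\ker\mathbf{H}_1\cap\ker\mathbf{H}_2=\{0\}\)) are stronger than the theorem's literal hypotheses; they hold almost surely for the Rician channel matrices to which the paper applies the result, and indeed some such assumption is implicitly needed for the statement (with \(S=M_1+M_2-N\) distinct GSVs) to be well posed, but you should state them explicitly rather than leave them inside parentheticals.
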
  
\begin{proof}
	See \cite{Ellipse}.
\end{proof}
Choose \(\mathbf{F}\) as the first \(N-M_2\) rows of \(\mathbf{I}_N\) and construct \(\mathbf{H}_3\) as (\ref{FR}). Denote \(\mu\) as the GSV of \(\{\mathbf{H}_1,\mathbf{H}_3\}\), then \(\mu\) equals to the eigenvalues of \(\mathbf{L} = \mathbf{H}_1(\mathbf{H}_3^H\mathbf{H}_3)^{-1}\mathbf{H}_1^H\), which can be derived by using the similar approach in the case of \(M_2>N\). Denote \(n = M_1+N\), construct
\begin{equation}
	\mathbf{J} = 
	\begin{pmatrix}
		\mathbf{O} &\mathbf{X}_3^H&\mathbf{O}&\mathbf{I}_{n}\\
		\mathbf{X}_3 &\mathbf{I}_{n} &\mathbf{O} &\mathbf{O}\\
		\mathbf{O} &\mathbf{O} &\mathbf{X}_1&-\mathbf{I}_{n}\\
		\mathbf{I}_{n} &\mathbf{O} &-\mathbf{I}_{n}&\mathbf{O}
	\end{pmatrix}.,
\end{equation}
where
\begin{equation}
	\mathbf{X}_1 = 
	\begin{pmatrix}
		\mathbf{O} &\mathbf{H}_1\\
		\mathbf{H}_1 &\mathbf{O}
	\end{pmatrix},
	\mathbf{X}_3 = 
	\begin{pmatrix}
		\mathbf{O}_{M_1\times M_1} &\mathbf{O}\\
		\mathbf{O} &\mathbf{H}_3
	\end{pmatrix}.
\end{equation}
Now \(\mathcal{G}^{\mathcal{D}_{M_1}}_{\mathbf{L}}(\mathbf{z\mathbf{I}_{M_1}})\) can be obtained by deriving \(\mathcal{G}^{\mathcal{D}_{4n}}_{\mathbf{J}}(\mathbf{C})\). \(\mathbf{J}\) can be expressed as the sum of a deterministic matrix and a random hermitian matrix as \(\mathbf{J} = \bar{\mathbf{J}}+\widetilde{\mathbf{J}}\), where the matrices \(\bar{\mathbf{J}}\) and \(\widetilde{\mathbf{J}}\) can be expressed as follows:
\begin{equation}
	\begin{aligned}
		\begin{small}
			\bar{\mathbf{J}} = \begin{pmatrix}
				\mathbf{O} &\bar{\mathbf{X}}_3^H&\mathbf{O}&\mathbf{I}_{n}\\
				\bar{\mathbf{X}}_3 &\mathbf{I}_{n} &\mathbf{O} &\mathbf{O}\\
				\mathbf{O} &\mathbf{O} &\bar{\mathbf{X}}_1&-\mathbf{I}_{n}\\
				\mathbf{I}_{n} &\mathbf{O} &-\mathbf{I}_{n}&\mathbf{O}
			\end{pmatrix},
			\widetilde{\mathbf{J}} = \begin{pmatrix}
				\mathbf{O} &\widetilde{\mathbf{X}}_3^H&\mathbf{O}&\mathbf{O}\\
				\widetilde{\mathbf{X}}_3 &\mathbf{O} &\mathbf{O} &\mathbf{O}\\
				\mathbf{O} &\mathbf{O} &\widetilde{\mathbf{X}}_1&\mathbf{O}\\
				\mathbf{O} &\mathbf{O} &\mathbf{O}&\mathbf{O}
			\end{pmatrix},
		\end{small}
	\end{aligned}
\end{equation}
\begin{equation}
	\bar{\mathbf{X}}_1 = 
	\begin{pmatrix}
		\mathbf{O}&\bar{\mathbf{H}}_1\\
		\bar{\mathbf{H}}_1^H&\mathbf{O}
	\end{pmatrix},
	\widetilde{\mathbf{X}}_1 = 
	\begin{pmatrix}
		\mathbf{O}&\widetilde{\mathbf{H}}_1\\
		\widetilde{\mathbf{H}}_1^H&\mathbf{O}
	\end{pmatrix},
\end{equation}
\begin{equation}\label{X3}
	\bar{\mathbf{X}}_3 = 
	\begin{pmatrix}
		\mathbf{O}_{M_1\times M_1} &\mathbf{O}\\
		\mathbf{O} &\bar{\mathbf{H}}_3
	\end{pmatrix},
	\widetilde{\mathbf{X}}_3 = 
	\begin{pmatrix}
		\mathbf{O}_{M_1\times M_1} &\mathbf{O}\\
		\mathbf{O} &\widetilde{\mathbf{H}}_3
	\end{pmatrix}.
\end{equation}
\begin{equation}
	\bar{\mathbf{H}}_3 = 
	\begin{pmatrix}
		\bar{\mathbf{H}}_2\\
		\epsilon\mathbf{F}
	\end{pmatrix},
	\widetilde{\mathbf{H}}_3 = 
	\begin{pmatrix}
		\widetilde{\mathbf{H}}_2\\
		\mathbf{O}
	\end{pmatrix}.
\end{equation}

Then the free deterministic equivalent of \(\mathbf{J}\) can be constructed as follows. Define \(\widetilde{\boldsymbol{\mathcal{H}}}_i\) as the same matrix in the case of \(M_2\geq N\). Then by replacing all \(\mathbf{H}_1\) with \(\boldsymbol{\mathcal{H}}_1 = \bar{\mathbf{H}}_1 + \widetilde{\boldsymbol{\mathcal{H}}}_1\) and \(\mathbf{H}_3\) with
\begin{equation}
	\boldsymbol{\mathcal{H}}_3 = 
	\begin{pmatrix}
		\bar{\mathbf{H}}_2\\
		\epsilon\mathbf{F}
	\end{pmatrix} + 
	\begin{pmatrix}
		\widetilde{\boldsymbol{\mathcal{H}}}_2\\
		\mathbf{O}
	\end{pmatrix},
\end{equation}
in \(\mathbf{J}\), \(\boldsymbol{\mathcal{J}}\) can be constructed as the free deterministic equivalent of \(\mathbf{J}\). \(\boldsymbol{\mathcal{J}}\) can be written as \(\boldsymbol{\mathcal{J}} = \bar{\mathbf{J}} + \widetilde{\boldsymbol{\mathcal{J}}}\), where \(\widetilde{\boldsymbol{\mathcal{J}}}\) can be constructed by replacing \(\widetilde{{\mathbf{H}}}_i\) with \(\widetilde{\boldsymbol{\mathcal{H}}}_i\) in \(\widetilde{\mathbf{J}}\). \(\boldsymbol{\mathcal{J}}\) and \(\mathbf{J}\) satisfy (\ref{JF}). The approach to derive \(\mathcal{G}^{\mathcal{D}_{4n}}_{\boldsymbol{\mathcal{J}}}(z\mathbf{I}_n)\) is presented as follows:

\begin{theorem}\label{IT2}
	Devide \(\mathcal{G}^{\mathcal{D}_{4n}}_{\boldsymbol{\mathcal{J}}}(\mathbf{C})\) into several block matrices as follows:
	
	Denote \(\mathcal{G}^{\mathcal{D}_{4n}}_{\boldsymbol{\mathcal{J}}}(\mathbf{C}) = \text{blkdiag}\{\mathbf{E}_1,\mathbf{E}_2,\mathbf{E}_3,\mathbf{E}_4\}\), and \(\mathbf{E}_i = \text{blkdiag}\{\mathbf{E}_{i,1},\mathbf{E}_{i,2}\}\), where \(\mathbf{E}_{1,1}\in\mathbb{C}^{M_1\times M_1}\), \(\mathbf{E}_{1,2}\in\mathbb{C}^{N\times N}\), \(\mathbf{E}_{2,1}\in\mathbb{C}^{M_1\times M_1}\), \(\mathbf{E}_{2,2}\in\mathbb{C}^{N\times N}\), \(\mathbf{E}_{3,1}\in\mathbb{C}^{M_1\times M_1}\), and \(\mathbf{E}_{3,2}\in\mathbb{C}^{N\times N}\) are diagonal matrices. Then \(\mathbf{E}_{i,j}\) satisfy the following equations:
	\begin{equation}\label{E112}
		\begin{aligned}
			\mathbf{E}_{1,1}  =  \text{diag}\{((z-\text{Tr}\{\mathbf{E}_{1,2}\})\mathbf{I}_{M_1}
			&-\bar{\mathbf{H}}_1\mathbf{A}_1^{-1}\bar{\mathbf{H}}_1^H)^{-1}\},
		\end{aligned}
	\end{equation}
	\begin{equation}\label{E122}
		\begin{aligned}
			\mathbf{E}_{1,2} =  \text{diag}\{(\mathbf{A}_1-(z-\text{Tr}\{\mathbf{E}_{1,2}\})^{-1}\bar{\mathbf{H}}_1^H\bar{\mathbf{H}}_1)^{-1}\},
		\end{aligned}
	\end{equation}
	\begin{equation}\label{E222}
		\begin{aligned}
			\mathbf{E}_{2,2}^{(1)} &= \text{diag}
			\{
			(\bar{\mathbf{H}}_2\mathbf{A}_2^{-1}\bar{\mathbf{H}}_2^H-(1+\text{Tr}\{\mathbf{E}_{1,2}\})\mathbf{I}_{M_2}\\
			-&\epsilon^2\bar{\mathbf{H}}_2\mathbf{A}_2^{-1}\mathbf{F}^H(\epsilon^2\mathbf{F}\mathbf{A}_2^{-1}\mathbf{F}^H-\mathbf{I}_{N-M_2})^{-1}\mathbf{F}\mathbf{A}_2^{-1}\bar{\mathbf{H}}_2^H)^{-1}
			\},
		\end{aligned}
	\end{equation}
	where
	\begin{equation}
		\begin{aligned}\label{AA1}
			\mathbf{A}_{1} = (1+\text{Tr}\{\mathbf{E}_{1,2}\})^{-1}\bar{\mathbf{H}}_2^H\bar{\mathbf{H}}_2&+\epsilon^2\mathbf{F}^H\mathbf{F}\\
			&-(\text{Tr}\{\mathbf{E}_{2,2}^{(1)}\}+\text{Tr}\{\mathbf{E}_{1,1}\})\mathbf{I}_N,
		\end{aligned}
	\end{equation}
	\begin{equation}\label{A22}
		\mathbf{A}_{2} = (z-\text{Tr}\{\mathbf{E}_{1,2}\})^{-1}\bar{\mathbf{H}}_1^H\bar{\mathbf{H}}_1+(\text{Tr}\{\mathbf{E}_{2,2}^{(1)}\}+\text{Tr}\{\mathbf{E}_{1,1}\})\mathbf{I}_N,
	\end{equation}
	\begin{equation}
		\mathbf{F} = 
		\begin{pmatrix}
			\mathbf{I}_{N-M_2},\mathbf{O}_{N-M_2\times M_2},
		\end{pmatrix}
	\end{equation}
	\(\mathbf{E}_{2,2}^{(1)}\) is the left-up \(M_2\times M_2\) block of \(\mathbf{E}_{2,2}\). 
	
\end{theorem}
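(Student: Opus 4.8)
The plan is to reproduce the argument behind Theorem~\ref{IT} (carried out in Appendix~\ref{ITA}), but with $\mathbf{H}_2$ replaced by the augmented matrix $\mathbf{H}_3$, keeping careful track of the extra deterministic block $\epsilon\mathbf{F}$. Because $\boldsymbol{\mathcal{J}} = \bar{\mathbf{J}} + \widetilde{\boldsymbol{\mathcal{J}}}$ is the sum of a deterministic matrix and a self-adjoint operator-valued matrix whose entries form a family of freely independent centered circular elements, $\widetilde{\boldsymbol{\mathcal{J}}}$ is an operator-valued semicircular element that is free from $\bar{\mathbf{J}}$. The subordination (matrix Dyson) formulation of operator-valued free probability then yields the self-consistent equation
\begin{equation}
	\mathcal{G}^{\mathcal{D}_{4n}}_{\boldsymbol{\mathcal{J}}}(\mathbf{C}) = \mathcal{E}_{\mathcal{D}_{4n}}\left\{\left(\mathbf{C} - \bar{\mathbf{J}} - \eta\left(\mathcal{G}^{\mathcal{D}_{4n}}_{\boldsymbol{\mathcal{J}}}(\mathbf{C})\right)\right)^{-1}\right\},
\end{equation}
where $\eta(\mathbf{D}) = \mathcal{E}_{\mathcal{D}_{4n}}\{\widetilde{\boldsymbol{\mathcal{J}}}\,\mathbf{D}\,\widetilde{\boldsymbol{\mathcal{J}}}\}$ is the covariance map of $\widetilde{\boldsymbol{\mathcal{J}}}$. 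Writing $\mathcal{G}^{\mathcal{D}_{4n}}_{\boldsymbol{\mathcal{J}}}(\mathbf{C}) = \text{blkdiag}\{\mathbf{E}_1,\mathbf{E}_2,\mathbf{E}_3,\mathbf{E}_4\}$ as in the statement, this reduces the problem to two computations: evaluating $\eta$ on a block-diagonal argument, and inverting the block matrix $\mathbf{C} - \bar{\mathbf{J}} - \eta(\mathcal{G})$.

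First I would evaluate $\eta$ on block-diagonal inputs. Since every nonzero off-diagonal entry of $\widetilde{\boldsymbol{\mathcal{J}}}$ is a copy of an entry of $\widetilde{\boldsymbol{\mathcal{H}}}_1$ or $\widetilde{\boldsymbol{\mathcal{H}}}_2$, the action of $\eta$ is governed entirely by Lemma~\ref{ED}: for diagonal $\mathbf{Y}$ we have $\mathcal{E}\{\widetilde{\boldsymbol{\mathcal{H}}}_i^H\mathbf{Y}\widetilde{\boldsymbol{\mathcal{H}}}_i\} = \text{Tr}\{\mathbf{Y}\}\mathbf{I}$ and symmetrically $\mathcal{E}\{\widetilde{\boldsymbol{\mathcal{H}}}_i\mathbf{Y}\widetilde{\boldsymbol{\mathcal{H}}}_i^H\} = \text{Tr}\{\mathbf{Y}\}\mathbf{I}$. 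Tracking which block of $\mathcal{G}$ each circular block multiplies produces the scalar trace couplings $\text{Tr}\{\mathbf{E}_{1,1}\}$, $\text{Tr}\{\mathbf{E}_{1,2}\}$ and $\text{Tr}\{\mathbf{E}_{2,2}^{(1)}\}$ appearing in (\ref{AA1})--(\ref{A22}). With $\eta(\mathcal{G})$ in hand, $\mathbf{C} - \bar{\mathbf{J}} - \eta(\mathcal{G})$ inherits the same arrow/chevron sparsity as $\bar{\mathbf{J}}$, so I would apply block Schur-complement inversion in the order used for Theorem~\ref{IT}, eliminating the $\mathbf{I}_n$-coupling blocks to expose the auxiliary matrices $\mathbf{A}_1$ and $\mathbf{A}_2$, then reading off the diagonal blocks to obtain (\ref{E112}) and (\ref{E122}) verbatim.

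The one genuinely new step---and the step I expect to be the main obstacle---is the treatment of the augmented channel, where the deterministic and random rows must be handled asymmetrically. Since $\boldsymbol{\mathcal{H}}_3 = \bar{\mathbf{H}}_3 + \widetilde{\boldsymbol{\mathcal{H}}}_3$ with $\bar{\mathbf{H}}_3$ stacking $\bar{\mathbf{H}}_2$ over $\epsilon\mathbf{F}$ but $\widetilde{\boldsymbol{\mathcal{H}}}_3$ stacking $\widetilde{\boldsymbol{\mathcal{H}}}_2$ over $\mathbf{O}$, the deterministic quadratic form $\bar{\mathbf{H}}_3^H\bar{\mathbf{H}}_3 = \bar{\mathbf{H}}_2^H\bar{\mathbf{H}}_2 + \epsilon^2\mathbf{F}^H\mathbf{F}$ contributes the extra $\epsilon^2\mathbf{F}^H\mathbf{F}$ term in (\ref{AA1}); crucially this term is \emph{undressed} by the resolvent factor $(1+\text{Tr}\{\mathbf{E}_{1,2}\})^{-1}$ that the random rows $\bar{\mathbf{H}}_2$ acquire, precisely because the $\epsilon\mathbf{F}$ rows carry no circular contribution. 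At the same time, Lemma~\ref{ED} applied to $\widetilde{\boldsymbol{\mathcal{H}}}_3$ only sees its top $M_2\times M_2$ corner, so it is the sub-block $\mathbf{E}_{2,2}^{(1)}$ rather than the full $N\times N$ block $\mathbf{E}_{2,2}$ that feeds back into the trace couplings. Isolating $\mathbf{E}_{2,2}^{(1)}$ therefore demands one additional Schur complement of the $\mathbf{F}$-rows against the $\mathbf{H}_2$-rows, which is exactly what generates the correction $-\epsilon^2\bar{\mathbf{H}}_2\mathbf{A}_2^{-1}\mathbf{F}^H(\epsilon^2\mathbf{F}\mathbf{A}_2^{-1}\mathbf{F}^H - \mathbf{I}_{N-M_2})^{-1}\mathbf{F}\mathbf{A}_2^{-1}\bar{\mathbf{H}}_2^H$ in (\ref{E222}). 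Verifying that this partitioned inversion closes consistently with the earlier blocks, and that the remaining off-diagonal blocks decouple so that only $\mathbf{E}_{1,1}$, $\mathbf{E}_{1,2}$ and $\mathbf{E}_{2,2}^{(1)}$ enter the fixed point, completes the argument.
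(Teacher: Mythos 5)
Your proposal follows essentially the same route as the paper's proof: the subordination equation with the covariance map $\eta(\mathbf{D})=\mathcal{E}_{\mathcal{D}_{4n}}\{\widetilde{\boldsymbol{\mathcal{J}}}\mathbf{D}\widetilde{\boldsymbol{\mathcal{J}}}\}$ (the paper phrases this as the $\mathcal{D}_{4n}$-valued R-transform of the semicircular $\widetilde{\boldsymbol{\mathcal{J}}}$), Lemma~\ref{ED} to reduce the covariance action to scalar traces, and block/Schur-complement inversion to read off the fixed-point equations. You also correctly pinpoint the three consequences of the deterministic-versus-random asymmetry in $\boldsymbol{\mathcal{H}}_3$ — the undressed $\epsilon^2\mathbf{F}^H\mathbf{F}$ term in $\mathbf{A}_1$, the appearance of $\text{Tr}\{\mathbf{E}_{2,2}^{(1)}\}$ rather than $\text{Tr}\{\mathbf{E}_{2,2}\}$ in the couplings, and the extra Schur complement of the $\mathbf{F}$-rows that produces the correction term in (\ref{E222}) — which are exactly the steps the paper carries out via $\mathbf{K}_2=\text{blkdiag}\{\mathbf{O},\text{Tr}\{\mathbf{E}_{1,2}\}\mathbf{I}_{M_2},\mathbf{O}_{N-M_2}\}$ and the final inversion of the matrix $\mathbf{T}$.
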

\begin{proof}
	See Appendix \ref{IT2A}.
\end{proof}
Similarly, the numerical results of \(\mathbf{E}_{i,j}\) with a certain input \(z_0\) can be obtained with an iterative approach. Then \(G_{\mathbf{L}}(z_0)\) can be obtained by using (\ref{GL}). To obtain \(G_{\omega}(z_0)\), the following theorem is introduced:
\begin{theorem}\label{Gw2}
	When \(\epsilon\to 0\), \(G_{\omega}(z_0)\) converges to the following result:
	\begin{equation}\label{GwE2}
		G_{\omega}(z_0) \to \frac{M_1}{S}G_{\mathbf{L}}(z_0).
	\end{equation}
\end{theorem}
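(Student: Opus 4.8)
\emph{Proof proposal for Theorem \ref{Gw2}.} The plan is to read the statement off the exact correspondence between the spectrum of $\mathbf{L}$ and the generalized singular values of $\{\mathbf{H}_1,\mathbf{H}_3\}$, and then to pass to the limit $\epsilon\to 0$ using Theorem \ref{ellipse}. Because $\mathbf{H}_3$ has full column rank $N$, the result of \cite{Chen} applies to the pair $\{\mathbf{H}_1,\mathbf{H}_3\}$, so all $M_1$ eigenvalues of $\mathbf{L}=\mathbf{H}_1(\mathbf{H}_3^H\mathbf{H}_3)^{-1}\mathbf{H}_1^H$ coincide with the GSVs $\mu_1<\mu_2<\dots<\mu_{M_1}$; moreover, full rank of $\mathbf{H}_3$ leaves $\mathbf{L}$ with no zero eigenvalue for $\epsilon>0$, in contrast with the case $M_2\geq N$. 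Expressing the Cauchy transform through the empirical spectral distribution of $\mathbf{L}$ then gives
\begin{equation}
	G_{\mathbf{L}}(z_0)=\frac{1}{M_1}\sum_{j=1}^{M_1}\frac{1}{z_0-\mu_j}.
\end{equation}

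Next I would split the sum at the index $S$ and invoke Theorem \ref{ellipse}: as $\epsilon\to 0$, the first $S$ values $\mu_1,\dots,\mu_S$ converge to the GSVs $\omega_1,\dots,\omega_S$ of the original pair $\{\mathbf{H}_1,\mathbf{H}_2\}$, whereas $\mu_{S+1},\dots,\mu_{M_1}\to+\infty$. Since $\text{Re}(z_0)<0$ and every eigenvalue satisfies $\mu_j\geq 0$, each summand $1/(z_0-\mu_j)$ is continuous in $\mu_j$ and uniformly bounded by $1/|\text{Re}(z_0)|$, so the limit may be taken term by term. The first block tends to $\sum_{j=1}^{S}1/(z_0-\omega_j)=S\,G_{\omega}(z_0)$, using that $F_\omega$ places equal mass on the $S$ values $\omega_j$; the second block tends to $0$ because $\mu_j\to+\infty$ makes each term vanish. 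Hence
\begin{equation}
	\lim_{\epsilon\to 0}G_{\mathbf{L}}(z_0)=\frac{S}{M_1}\,G_{\omega}(z_0),
\end{equation}
and multiplying by $M_1/S$ yields the asserted $G_{\omega}(z_0)\to\frac{M_1}{S}G_{\mathbf{L}}(z_0)$. This is exactly where the present case departs from Theorem \ref{Gw}: there $\mathbf{L}$ carries $M_1-S$ genuine zero eigenvalues contributing the term $-\frac{M_1-S}{Sz_0}$, whereas here the same $M_1-S$ modes are pushed out to $+\infty$ by the construction of $\mathbf{H}_3$ and contribute nothing, which is why that correction is absent.

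The step I expect to require the most care is the commutation of the two limiting operations. The identity above is transparent for a fixed finite realization, but the quantity $G_{\mathbf{L}}(z_0)$ actually delivered by the iterative scheme is the deterministic-equivalent value $\frac{1}{M_1}\text{Tr}\{\mathbf{E}_{1,1}\}$ of (\ref{GL}), i.e.\ an $n\to\infty$ limit, so I must ensure that $\lim_{\epsilon\to 0}$ and $\lim_{n\to\infty}$ may be exchanged. This reduces to checking that the $M_1-S$ divergent eigenvalues escape every compact set uniformly, so that their aggregate contribution to the Cauchy transform vanishes uniformly in $n$; the continuous convergence in $\epsilon$ guaranteed by Theorem \ref{ellipse} supplies precisely this, and the remaining bounded-convergence argument on the resolvent of $\mathbf{L}$ is routine.
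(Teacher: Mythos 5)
Your proposal is correct and follows essentially the same route as the paper's own proof: both split the spectrum of $\mathbf{L}$ into the $S$ eigenvalues that converge (by Theorem \ref{ellipse}) to the GSVs of $\{\mathbf{H}_1,\mathbf{H}_2\}$ and the $M_1-S$ eigenvalues that diverge to $+\infty$, whose resolvent terms vanish, yielding $G_{\mathbf{L}}(z_0)\to\frac{S}{M_1}G_{\omega}(z_0)$. Your closing remark on justifying the interchange of $\lim_{\epsilon\to 0}$ and $n\to\infty$ is a point of rigor the paper's proof glosses over, but it does not change the argument.
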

\begin{proof}
	See Appendix \ref{Gw2A}.
\end{proof}

Now \(G_{\omega}(z)\) can be obtained in both cases, and can be used to derive \(\bar{R}_i\), which will be discussed in the next section.

\subsection{The Results of Average Rates}
In this subsection, the relationship between \(\bar{R}_i\) and \(G_{\omega}(z)\) is studied. This requires the following theorem.
\begin{theorem}\label{rate}
	\(\bar{R}_i\) can be computed with \(G_{\omega}(z)\) as follows:
	\begin{equation}\label{R1}
		\bar{R}_1 = \log\left(1+\frac{l_1\rho}{td_1^{\tau}}\right) +  \int_{0}^{\frac{l_1\rho}{td_1^{\tau}}}\left(\frac{1}{(y+1)^2}G_{\omega}(-(y+1)^{-1})\right)dy,
	\end{equation}
	\begin{equation}\label{R2}
		\bar{R}_2 = \int_{0}^{\frac{\rho}{td_2^{\tau}}}\left(-G_{\omega}(-(1+y))+l_1G_{\omega}(-(1+l_1y))\right)dy,
	\end{equation}
\end{theorem}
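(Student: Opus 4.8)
The plan is to regard each average rate as a function of a single effective SNR parameter, to observe that it vanishes when that parameter is zero, and to recover it by the fundamental theorem of calculus after differentiating under the integral sign. Differentiation turns the logarithm in (\ref{aR1})--(\ref{aR2}) into a kernel of the form $1/(z-x)$, which is exactly the integrand defining $G_\omega$; integrating this kernel against $dF_\omega(x)$ therefore returns $G_\omega$ evaluated at a negative real argument (admissible because $\mathrm{Re}(z)<0$ is the standing assumption). I use throughout that $F_\omega$ is a probability distribution, so $\int_0^\infty dF_\omega(x)=1$.

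For (\ref{R1}) I would fix $a=\tfrac{l_1\rho}{td_1^{\tau}}$ and set $g(a)=\int_0^\infty\log\!\big(1+\tfrac{x}{1+x}a\big)\,dF_\omega(x)$, so that $g(0)=0$. Differentiating under the integral gives $g'(a)=\int_0^\infty \tfrac{x}{1+(1+a)x}\,dF_\omega(x)$, and the decisive algebraic identity is
\[
\frac{x}{1+(1+a)x}=\frac{1}{1+a}+\frac{1}{(1+a)^2}\,\frac{1}{z-x},\qquad z=-(1+a)^{-1},
\]
which follows by clearing denominators. Integrating against $dF_\omega$ and using $\int dF_\omega=1$ together with the definition of the Cauchy transform yields $g'(a)=\tfrac{1}{1+a}+\tfrac{1}{(1+a)^2}G_\omega\!\big(-(1+a)^{-1}\big)$. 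Integrating in $a$ from $0$, the first term produces $\log(1+a)=\log\!\big(1+\tfrac{l_1\rho}{td_1^{\tau}}\big)$, and the second term is precisely the integral on the right of (\ref{R1}) after renaming $a\mapsto y$.

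For (\ref{R2}) I would first use $l_1+l_2=1$ to rewrite the argument of the logarithm as $\tfrac{1+(1+x)c}{l_1+(1+x)c}$ with $c=\rho^{-1}td_2^{\tau}$, and then reparametrize by $y=1/c=\tfrac{\rho}{td_2^{\tau}}$, turning the integrand into $\log(y+1+x)-\log(l_1y+1+x)$, which again vanishes at $y=0$. Differentiating in $y$ leaves $\tfrac{1}{y+1+x}-\tfrac{l_1}{l_1y+1+x}$; since $\tfrac{1}{y+1+x}=-\tfrac{1}{z-x}$ with $z=-(1+y)$ and $\tfrac{l_1}{l_1y+1+x}=-l_1\tfrac{1}{z-x}$ with $z=-(1+l_1y)$, integrating against $dF_\omega$ converts the derivative into $-G_\omega(-(1+y))+l_1G_\omega(-(1+l_1y))$. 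Integrating from $0$ to $\tfrac{\rho}{td_2^{\tau}}$ then reproduces (\ref{R2}).

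The main obstacle is rigor rather than any new idea: justifying the interchange of differentiation and integration, and confirming that every Cauchy transform that appears has an admissible argument. Both follow from the standing restriction $\mathrm{Re}(z)<0$ and $\omega\ge 0$: for negative real $z$ the kernel $1/(z-x)$ is bounded by $1/|z|$ uniformly in $x\ge 0$, so each integral converges and dominated convergence licenses differentiating under the integral on every compact subinterval of the $a$- or $y$-range. The only remaining bookkeeping is to carry the multiplicative factor $S$ from the $S$ identical subchannels, together with the normalization $\int dF_\omega=1$, consistently through both computations.
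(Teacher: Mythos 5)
Your proposal is correct and follows essentially the same route as the paper's own proof: differentiate $\bar{R}_1$ with respect to $a=\tfrac{l_1\rho}{td_1^{\tau}}$ and $\bar{R}_2$ with respect to $b=\tfrac{\rho}{td_2^{\tau}}$, recognize the resulting kernels as $\tfrac{1}{z-x}$ evaluated at negative real arguments so the integrals become $G_{\omega}$, and recover the rates by the fundamental theorem of calculus using $\bar{R}_i=0$ at zero SNR. Your explicit partial-fraction identities and the dominated-convergence justification for differentiating under the integral are the same computations the paper performs (stated there more tersely), so the two arguments coincide in substance.
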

\begin{proof}
	See Appendix \ref{rateA}. 
\end{proof}

By applying this theorem, the approximated numerical values of \(\bar{R}_i\) can be obtained by computing the numerical integrations in (\ref{R1}) and (\ref{R2}).

\section{The Special Case of Rayleigh Fading}
In \cite{Chen}, the average rates of GSVD-base MIMO-NOMA communications systems with Rayleigh fading were provided. However, the expressions are not in closed-forme. Besides, the numbers of antennas are restricted as \(M_1=M_2\). Motivated by this, the close-formed expressions of \(\bar{R}_i\) with Rayleigh fading are analyzed and derived based on the results presented in the previous section.

When the channel is modeled as Rayleigh fading, i.e., \(\bar{\mathbf{H}}_i = \mathbf{O}\), for the case of \(M_2\geq N\), (\ref{E11}) - (\ref{A2}) can be rewritten as follows:
\begin{equation}\label{RE11}
	\begin{aligned}
		\text{Tr}\{\mathbf{E}_{1,1}\}  &=  \text{Tr}\{(z-\text{Tr}\{\mathbf{E}_{1,2}\})^{-1}\mathbf{I}_{M_1}
		\} \\
		&= M_1(z-\text{Tr}\{\mathbf{E}_{1,2}\})^{-1},
	\end{aligned}
\end{equation}
\begin{equation}\label{RE12}
	\begin{aligned}
		\text{Tr}\{\mathbf{E}_{1,2}\} &=  -\text{Tr}\{(\text{Tr}\{\mathbf{E}_{2,2}\}+\text{Tr}\{\mathbf{E}_{1,1}\})^{-1}\mathbf{I}_N\}\\
		&=-N(\text{Tr}\{\mathbf{E}_{2,2}\}+\text{Tr}\{\mathbf{E}_{1,1}\})^{-1},
	\end{aligned}
\end{equation}
\begin{equation}\label{RE22}
	\begin{aligned}
		\text{Tr}\{\mathbf{E}_{2,2}\} &= -\text{Tr}
		\{
		(1+\text{Tr}\{\mathbf{E}_{1,2}\})^{-1}\mathbf{I}_{M_2}
		\}\\
		&=-M_2(1+\text{Tr}\{\mathbf{E}_{1,2}\})^{-1}.
	\end{aligned}
\end{equation}

And for the case of \(M_2< N <M_1+M_2\). (\ref{E112}) - (\ref{A22}) can be rewritten as follows:
\begin{equation}\label{RE112}
	\begin{aligned}
		\text{Tr}\{\mathbf{E}_{1,1}\} =  M_1(z-\text{Tr}\{\mathbf{E}_{1,2}\})^{-1},
	\end{aligned}
\end{equation}
\begin{equation}\label{RE122}
	\begin{aligned}
		\text{Tr}\{\mathbf{E}_{1,2}\} &=  -\text{Tr}\{(\text{Tr}\{\mathbf{E}_{2,2}^{(1)}\}+\text{Tr}\{\mathbf{E}_{1,1}\}-\epsilon^2\mathbf{F}^H\mathbf{F})^{-1}\mathbf{I}_N\}\\
		&
		\begin{aligned}
			&=-M_2(\text{Tr}\{\mathbf{E}_{2,2}^{(1)}\}+\text{Tr}\{\mathbf{E}_{1,1}\})^{-1}\\
			&-(N-M_2)(\text{Tr}\{\mathbf{E}_{2,2}^{(1)}\}+\text{Tr}\{\mathbf{E}_{1,1}\}-\epsilon^2)^{-1},
		\end{aligned}
	\end{aligned}
\end{equation}
\begin{equation}\label{RE222}
	\begin{aligned}
		\text{Tr}\{\mathbf{E}_{2,2}^{(1)}\} 
		=-M_2(1+\text{Tr}\{\mathbf{E}_{1,2}\})^{-1}.
	\end{aligned}
\end{equation}

Note when \(\epsilon\to 0\), these two iterative processes are equivalent. Thus the two cases \(M_2\geq N\) and \(M_2<N<M_1+M_2\) can be unified. Now the iterative process is simplified as a ternary quadratic equation set, which can be easily solved. 

\begin{theorem}\label{G}
	When \(\bar{\mathbf{H}}_i = \mathbf{O}\), the closed-form asymptotic results of \(G_{\mathbf{L}}(z)\) can be derived as follows:
	\begin{equation}\label{RGw}
		\begin{aligned}
			G_{\mathbf{L}}&(z) \approx \\
			&\left\{
			\begin{aligned}
				&\frac{M_1-N+2M_1z+M_2z-Nz+\sqrt{\Delta(z)}}{2M_1z(z+1)},z\neq -1,\\
				&-\frac{M_1+M_2-N}{M_1+M_2},z = -1,
			\end{aligned}
			\right.
		\end{aligned}
	\end{equation}
	where
	\begin{equation}\label{Deltaz}
		\begin{aligned}
			\Delta(z) = (M_1-N)^2-2Qz+(M_2-N)^2z^2,
		\end{aligned}
	\end{equation}
	\begin{equation}
		Q = NM_1+NM_2+M_1M_2-N^2.
	\end{equation}
\end{theorem}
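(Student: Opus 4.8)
The plan is to exploit the remark made just before the statement: once $\bar{\mathbf{H}}_i=\mathbf{O}$ and $\epsilon\to 0$, the two iterative systems (\ref{RE11})--(\ref{RE22}) and (\ref{RE112})--(\ref{RE222}) collapse to a single set of three scalar fixed-point equations, because in (\ref{RE122}) the combination $M_2(\cdot)^{-1}+(N-M_2)(\cdot-\epsilon^2)^{-1}$ tends to $N(\cdot)^{-1}$. Writing the shorthand $e_1=\text{Tr}\{\mathbf{E}_{1,1}\}$, $e_2=\text{Tr}\{\mathbf{E}_{1,2}\}$ and $e_3=\text{Tr}\{\mathbf{E}_{2,2}\}$, I would record the unified system
\begin{equation}
	e_1 = \frac{M_1}{z-e_2}, \qquad e_2 = \frac{-N}{e_3+e_1}, \qquad e_3 = \frac{-M_2}{1+e_2},
\end{equation}
and recall from (\ref{GL}) that the target quantity is $G_{\mathbf{L}}(z)=e_1/M_1$.

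Next I would eliminate $e_3$ and $e_2$ to obtain a single equation in $e_1$. Substituting the third equation into the second expresses $e_2$ as a rational function of $e_1$; feeding this into the first equation and clearing denominators produces, after routine algebra, the quadratic
\begin{equation}
	z(z+1)e_1^2 + \big[(N-2M_1-M_2)z+(N-M_1)\big]e_1 - M_1(N-M_1-M_2) = 0 .
\end{equation}
Applying the quadratic formula and dividing by $M_1$ then yields $G_{\mathbf{L}}(z)$ with numerator $-\big[(N-2M_1-M_2)z+(N-M_1)\big]=M_1-N+2M_1z+M_2z-Nz$ and denominator $2M_1z(z+1)$, exactly as in (\ref{RGw}). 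I would then verify that the discriminant of the quadratic equals $\Delta(z)$: its $z^2$-coefficient $(N-2M_1-M_2)^2+4M_1(N-M_1-M_2)$ simplifies to $(M_2-N)^2$, its constant term is $(M_1-N)^2$, and its $z$-coefficient collapses to $-2Q$ with $Q=NM_1+NM_2+M_1M_2-N^2$, confirming (\ref{Deltaz}).

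The genuinely delicate step, and the main obstacle, is the choice of sign in front of $\sqrt{\Delta(z)}$, since both algebraic roots satisfy the quadratic. Here I would invoke the analytic characterisation of the Cauchy transform together with the domain restriction $\text{Re}(z)<0$ imposed in the paper: $G_{\mathbf{L}}$ is the Stieltjes transform of the limiting spectral measure of $\mathbf{L}$, so $zG_{\mathbf{L}}(z)$ must tend to the normalised count of finite eigenvalues as $z\to-\infty$. Expanding the two candidate roots for large $|z|$, with the principal branch of $\sqrt{\Delta(z)}$, shows that only the $+$ branch reproduces this limit (it gives $zG_{\mathbf{L}}(z)\to 1$ when $M_2\ge N$, and $zG_{\mathbf{L}}(z)\to S/M_1$ when $M_2<N<M_1+M_2$, the latter reflecting the $M_1-S$ eigenvalues that escape to infinity as $\epsilon\to 0$ by Theorem~\ref{ellipse}); the other root yields the wrong mass and is discarded. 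Finally, the line $z=-1$ is treated separately: there the leading coefficient $z(z+1)$ vanishes and the quadratic degenerates to the linear equation $(M_1+M_2)e_1=M_1(N-M_1-M_2)$, giving $G_{\mathbf{L}}(-1)=e_1/M_1=-(M_1+M_2-N)/(M_1+M_2)$, which one can also confirm as the $z\to-1$ limit of the general branch by a short expansion, establishing continuity and completing the proof.
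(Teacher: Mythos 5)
Your proposal is correct and follows essentially the same route as the paper: the paper's proof is the one-line statement ``By solving (\ref{RE11})--(\ref{RE22}), (\ref{RGw}) can be obtained,'' and you carry out precisely that elimination, arriving at the same quadratic, discriminant $\Delta(z)$, and formula. Your additional care — verifying the unification of the two antenna regimes as $\epsilon\to 0$, selecting the $+$ branch via the asymptotics $zG_{\mathbf{L}}(z)\to 1$ (resp.\ $S/M_1$), and treating the degenerate line $z=-1$ where the quadratic becomes linear — supplies details the paper leaves implicit, but does not change the approach.
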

\begin{proof}
	By solving (\ref{RE11}) - (\ref{RE22}), (\ref{RGw}) can be obtained.
\end{proof}

Define \(I(a,b) = \int_{a}^{b} G_{\mathbf{L}}(z)dz\). The close-formed expression of \(I(a,b)\) can be derived by taking the definite integral of (\ref{RGw}) as follows:
\begin{equation}\label{Iab}
	I(a,b) = \frac{M_1-N}{2M_1}\log\frac{b}{a}+\frac{1}{2M_1}(I_1(a,b)+I_2(a,b)+I_3(a,b)),
\end{equation} 
where
\begin{equation}
		\begin{aligned}
			&I_1(a,b) = -|M_1-N|\times\\
			&\log\left(\frac{a}{b}\frac{(M_1-N)^2-Qb+|M_1-N|\sqrt{\Delta(b)}}{(M_1-N)^2-Qa+|M_1-N|\sqrt{\Delta(a)}}\right),
		\end{aligned}
\end{equation}
\begin{equation}
	\begin{aligned}
		&I_2(a,b) = -|M_2-N|\times\\
		&\log\left(\frac{Q-(M_2-N)^2b+|M_2-N|\sqrt{\Delta(b)}}{Q-(M_2-N)^2a+|M_2-N|\sqrt{\Delta(a)}}\right),
	\end{aligned}
\end{equation}
\begin{equation}
	\begin{aligned}
		&I_3(a,b) = (M_1+M_2)\times\\
		&\log\left(\frac{Q_1-Q_2b+(M_1+M_2)\sqrt{\Delta(b)}}{Q_1-Q_2a+(M_1+M_2)\sqrt{\Delta(a)}}\right),
	\end{aligned}
\end{equation}
\begin{equation}
	\begin{aligned}
		&Q_1 = M_1^2+M_1M_2+NM_2-NM_1,\\
		&Q_2 = M_2^2+M_1M_2+NM_1-NM_2.
	\end{aligned}
\end{equation}
\(\Delta(a)\) and \(\Delta(b)\) are defined by substituting \(z\) with \(a\) and \(b\) in (\ref{Deltaz}), respectively. The integration only involves basic mathematical knowledge but is tedious, so the detailed process is not concluded in this paper. Then closed-form expressions of \(\bar{R}_i\) can be obtained as follows:

\begin{theorem}\label{Rrate}
	When \(\bar{\mathbf{H}}_i = \mathbf{O}\), the asymptotic result of \(\bar{R}_i\) can be expressed as follows:
	\begin{itemize}
		\item When \(M_2\geq N\).
		\begin{equation}\label{RR1}
			\begin{small}
				\begin{aligned}
					\bar{R}_1 \approx \frac{M_1}{S}\log\left(1+\frac{l_1\rho}{td_1^{\tau}}\right)\!+\!\frac{M_1}{S}I(-1,-\frac{td_1^{\tau}}{l_1\rho+\!td_1^{\tau}}),
				\end{aligned}
			\end{small}
		\end{equation}
		\begin{equation}\label{RR2}
			\begin{small}
				\begin{aligned}
					\bar{R}_2 \approx
					\frac{M_1}{S}I(-1-\frac{\rho l_1}{d_2^{\tau}},-1-\frac{\rho}{d_2^{\tau}})\!-\!\frac{M_1-S}{S}\ln\frac{d_2^{\tau}+\rho}{d_2^{\tau}+\rho l_1}.
				\end{aligned}
			\end{small}
		\end{equation}
	\item  When \(M_2<N<M_1+M_2\)
	\begin{equation}\label{RR12}
		\begin{small}
			\begin{aligned}
				\bar{R}_1 \approx \log\left(1+\frac{l_1\rho}{td_1^{\tau}}\right)+\frac{M_1}{S}I(-1,-\frac{td_1^{\tau}}{l_1\rho+\!td_1^{\tau}}),
			\end{aligned}
		\end{small}
	\end{equation}
	\begin{equation}\label{RR22}
		\begin{small}
			\begin{aligned}
				\bar{R}_2 \approx
				\frac{M_1}{S}I(-1-\frac{\rho l_1}{td_2^{\tau}},-1-\frac{\rho}{td_2^{\tau}}).
			\end{aligned}
		\end{small}
	\end{equation}
	\end{itemize}

\end{theorem}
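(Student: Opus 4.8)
The plan is to start from the two integral representations of $\bar{R}_1$ and $\bar{R}_2$ in terms of $G_{\omega}$ furnished by Theorem \ref{rate}, substitute the closed-form relation between $G_{\omega}$ and $G_{\mathbf{L}}$ from Theorem \ref{Gw} (for $M_2\geq N$) or Theorem \ref{Gw2} (for $M_2< N< M_1+M_2$), and then reduce every integral to the already-evaluated primitive $I(a,b)=\int_a^b G_{\mathbf{L}}(z)\,dz$ of \eqref{Iab} together with elementary logarithmic integrals of $1/z$. The asymptotic ``$\approx$'' in the statement is inherited directly from the closed form of $G_{\mathbf{L}}$ in Theorem \ref{G}, so no further approximation is introduced here.

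For $\bar{R}_1$ I would apply the substitution $z=-(y+1)^{-1}$ to \eqref{R1}. The crucial observation is that under this substitution $\frac{1}{(y+1)^2}\,dy = dz$, so the integral collapses to $\int_{-1}^{-td_1^{\tau}/(l_1\rho+td_1^{\tau})} G_{\omega}(z)\,dz$, with the endpoints $y=0$ and $y=l_1\rho/(td_1^{\tau})$ mapping to $z=-1$ and $z=-td_1^{\tau}/(l_1\rho+td_1^{\tau})$. In the case $M_2\geq N$ I substitute $G_{\omega}(z)=\frac{M_1}{S}G_{\mathbf{L}}(z)-\frac{M_1-S}{Sz}$: the $G_{\mathbf{L}}$ part yields $\frac{M_1}{S}I(\cdot,\cdot)$ directly, while the $-\frac{M_1-S}{Sz}$ part integrates to a $\ln|z|$ term. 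Using the identity $\ln\frac{td_1^{\tau}}{l_1\rho+td_1^{\tau}}=-\log\bigl(1+\frac{l_1\rho}{td_1^{\tau}}\bigr)$, this logarithmic contribution recombines with the leading $\log\bigl(1+\frac{l_1\rho}{td_1^{\tau}}\bigr)$ and promotes its coefficient from $1$ to $M_1/S$, giving \eqref{RR1}. In the case $M_2< N< M_1+M_2$ the $-\frac{M_1-S}{Sz}$ term is absent by Theorem \ref{Gw2}, so no logarithmic correction survives and \eqref{RR12} follows immediately.

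For $\bar{R}_2$ I would split \eqref{R2} into its two summands and apply the affine substitutions $z=-(1+y)$ to the first and $w=-(1+l_1 y)$ to the second; both Jacobians are constant, and $y=\rho/(td_2^{\tau})$ maps to $z=-1-\rho/(td_2^{\tau})$ and $w=-1-l_1\rho/(td_2^{\tau})$ respectively, while $y=0$ maps to $-1$ in both. After inserting $G_{\omega}=\frac{M_1}{S}G_{\mathbf{L}}-\frac{M_1-S}{Sz}$, the two $G_{\mathbf{L}}$-integrals share the common endpoint $-1$, so by the interval additivity $I(a,-1)+I(-1,b)=I(a,b)$ they merge into the single term $\frac{M_1}{S}I(-1-l_1\rho/(td_2^{\tau}),-1-\rho/(td_2^{\tau}))$ appearing in \eqref{RR2}. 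The two $1/z$ integrals contribute $\ln$ values at the finite endpoints whose difference simplifies to $-\frac{M_1-S}{S}\ln\frac{d_2^{\tau}+\rho}{d_2^{\tau}+l_1\rho}$, completing \eqref{RR2}; in the full-rank-construction case the $1/z$ term is again absent and \eqref{RR22} drops out.

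Since the analytically heavy steps---deriving $G_{\mathbf{L}}$, its primitive $I(a,b)$, and the $G_{\omega}$--$G_{\mathbf{L}}$ relations---are supplied by Theorems \ref{G}, \ref{Gw}, \ref{Gw2} and \eqref{Iab}, this argument is essentially a change-of-variables and bookkeeping exercise. I expect the only genuine care to be required in the $M_2\geq N$ case: tracking the sign and coefficient of the $1/z$ logarithmic terms and verifying that they recombine with the leading $\log(1+\cdots)$ factor so the prefactor becomes exactly $M_1/S$, and in correctly orienting the integration limits when merging the two $I(\cdot,\cdot)$ pieces for $\bar{R}_2$ through interval additivity.
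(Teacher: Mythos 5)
Your proposal is correct and follows essentially the same route as the paper's proof: both reduce \(\bar{R}_1\) and \(\bar{R}_2\) via the substitutions \(z=-(y+1)^{-1}\) and \(z=-(1+y)\), \(w=-(1+l_1y)\) to integrals of \(G_{\omega}\), then invoke Theorems \ref{Gw} and \ref{Gw2} to express these through \(I(a,b)\) plus elementary \(1/z\) logarithmic terms, with the \(\ln\) contribution promoting the coefficient of \(\log\bigl(1+\tfrac{l_1\rho}{td_1^{\tau}}\bigr)\) to \(M_1/S\) exactly as you describe (the paper merely performs the two steps in the opposite order, integrating the \(G_{\omega}\)--\(G_{\mathbf{L}}\) relation before changing variables). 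Your bookkeeping of signs, endpoints, and the interval additivity \(I(a,-1)+I(-1,b)=I(a,b)\) matches the paper's computation.
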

\begin{proof}
	See Appendix \ref{RrateA}.
\end{proof}

In \cite{Chen}, the PDF of \(\omega\) is derived when \(M_1=M_2\). To compare the results, \(f_{\omega}(x)\) is derived from (\ref{RGw}) as follows:

\begin{theorem}\label{PDF}
	When \(\bar{\mathbf{H}}_i=\mathbf{O}\), the PDF of \(\omega\) can be expressed as follows:
	\begin{equation}\label{fw}
		f_{\omega}(x) \!=\! \left\{
		\begin{aligned}
			&\frac{\sqrt{2Qx\!-\!(M_2\!-\!N)^2x^2\!-\!(M_1\!-\!N)^2}}{2\pi Sx(x+1)}, x_1\!<\!x\!<\!x_2\\
			&0,\text{otherwise}
		\end{aligned}
		\right.,
	\end{equation}
	where
	\begin{equation}\label{x12}
		\begin{aligned}
			&x_1 = \frac{Q-2\sqrt{M_1M_2(NM_1+NM_2-N^2)}}{(M_2-N)^2},\\
			&x_2 = \frac{Q+2\sqrt{M_1M_2(NM_1+NM_2-N^2)}}{(M_2-N)^2}.
		\end{aligned}
	\end{equation} 
\end{theorem}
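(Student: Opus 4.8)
The plan is to recover $f_\omega$ from the closed-form Cauchy transform by the Stieltjes--Perron inversion formula. Since the Cauchy transform used here is $G_\omega(z)=\int_{\mathbb{R}}(z-x)^{-1}\,dF_\omega(x)$, at every interior point of continuity of the support one has
\[
f_\omega(x)=-\frac{1}{\pi}\lim_{\eta\to 0^{+}}\mathrm{Im}\{G_\omega(x+i\eta)\}.
\]
First I would assemble $G_\omega$ in closed form by combining Theorem~\ref{G} with the relation of Theorem~\ref{Gw} (for $M_2\ge N$) or Theorem~\ref{Gw2} (for $M_2<N<M_1+M_2$, as $\epsilon\to 0$). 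Because the factor $M_1/S$ cancels the $M_1$ in the denominator of (\ref{RGw}), the only non-rational term of $G_\omega(z)$ is $\frac{M_1}{S}\cdot\frac{\sqrt{\Delta(z)}}{2M_1 z(z+1)}=\frac{\sqrt{\Delta(z)}}{2S\,z(z+1)}$; every remaining contribution (the polynomial-over-$z(z+1)$ part of $G_{\mathbf{L}}$ and the correction $-\frac{M_1-S}{Sz}$) is a rational function with real coefficients, hence real on the real axis away from the poles $\{0,-1\}$ and invisible to $\mathrm{Im}\{\cdot\}$.

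The imaginary part is therefore produced solely by the branch cut of $\sqrt{\Delta(z)}$ with $\Delta(z)=(M_2-N)^2z^2-2Qz+(M_1-N)^2$, so the density is supported exactly where $\Delta(x)<0$, i.e.\ between the two real roots of $\Delta$. Solving the quadratic gives $x_{1,2}=\big(Q\mp\sqrt{Q^2-(M_1-N)^2(M_2-N)^2}\big)/(M_2-N)^2$, and the discriminant collapses via the identity $Q^2-(M_1-N)^2(M_2-N)^2=4M_1M_2(NM_1+NM_2-N^2)$, which follows from $Q+A=2M_1M_2$ and $Q-A=2N(M_1+M_2-N)$ with $A=(M_1-N)(M_2-N)$; this produces precisely the endpoints in (\ref{x12}). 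Outside $[x_1,x_2]$ the radicand is nonnegative, $\sqrt{\Delta}$ is real, and $f_\omega\equiv 0$.

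On the support I would fix the branch of $\sqrt{\Delta}$ using the general property that $\mathrm{Im}\,G_\omega<0$ on the upper half-plane (equivalently, the normalisation $G_{\mathbf{L}}(z)\sim 1/z$ as $z\to\infty$), which forces $\sqrt{\Delta(x+i0^{+})}=-i\sqrt{-\Delta(x)}$ for $x_1<x<x_2$. Since the support lies in $(0,\infty)$ we have $x(x+1)>0$, and substitution yields
\[
f_\omega(x)=-\frac{1}{\pi}\,\mathrm{Im}\!\left\{\frac{-i\sqrt{-\Delta(x)}}{2S\,x(x+1)}\right\}=\frac{\sqrt{2Qx-(M_2-N)^2x^2-(M_1-N)^2}}{2\pi S\,x(x+1)},
\]
which is (\ref{fw}). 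I expect the branch bookkeeping to be the main obstacle: one must check that the chosen sheet of $\sqrt{\Delta}$ is consistent along the whole cut and delivers a nonnegative density (positivity is the decisive sanity check), and confirm that the real correction $-\frac{M_1-S}{Sz}$ together with the $\epsilon\to 0$ unification of the two antenna regimes genuinely leaves the interior density unchanged, so that the single formula (\ref{fw}) covers both $M_2\ge N$ and $M_2<N<M_1+M_2$.
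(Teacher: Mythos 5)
Your proposal is correct and follows essentially the same route as the paper: invert the closed-form Cauchy transform of Theorem \ref{G}, combined with Theorems \ref{Gw}/\ref{Gw2}, via Stieltjes--Perron inversion, observe that the rational terms contribute nothing to the imaginary part, and read the density off the branch cut of $\sqrt{\Delta(z)}$ between the two roots $x_1,x_2$ of $\Delta$. The only differences are minor and in your favour: you fix the branch by the Nevanlinna property $\mathrm{Im}\,G_\omega<0$ in the upper half-plane, which is cleaner than the paper's explicit angle computation $\theta\to\pi$ (whose sign bookkeeping is delicate, since $\mathrm{Im}\,\Delta(x+iy)$ changes sign across the vertex of the parabola inside the support), and you explicitly verify the endpoint identity $Q^2-(M_1-N)^2(M_2-N)^2=4M_1M_2(NM_1+NM_2-N^2)$ that the paper leaves implicit.
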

\begin{proof}
	See Appendix \ref{PDFA}.
\end{proof}

When \(M_1 = M_2\), \(f_{\omega}(x)\) can be verified as the same as the results in \cite{Chen}. This can also verify the analytical results in Theorem \ref{IT}, \ref{IT2} and \ref{G}.
    
\section{Simulation Results}

In this section, numerical results are presented to validate the proposed analytical results. All the numerical results are obtained from \(10^6\) simulation experiments with Matlab. The parameters of the channel are chosen as \(d_1 = 200\) m, \(d_2 = 2000\) m, \(\tau = 2,\). The power of the white noise is set as \(-20\) dBm.

Fig. \ref{fig_OMA} shows the numerical and analytical results of sum rates \(\bar{R}_1+\bar{R}_2\) achieved by NOMA. The two sets of antenna numbers are chosen as \((M_1,M_2,N)=(24,24,36)\) and \((M_1,M_2,N)=(36,48,60)\). The power allocation coefficient is set as \(l_1 = 0.9\). The deterministic matrices \(\bar{\mathbf{H}}_i\) are randomly generated. The analytical results are obtained from Theorem \ref{rate}. For comparison, the results of the traditional OMA scheme are also presented in Fig. \ref{fig_OMA}. As shown in this figure, the gaps between numerical and analytical results are negligible, which verifies the accuracy of the proposed approach. Besides, it can be observed from the figure that the NOMA scheme has higher sum rates than that of OMA scheme, which shows the superior performance of the NOMA scheme. In particular, the gaps between the rates of GSVD-NOMA and the other schemes increase as \(P\) increases, which demonstrates the great benefits of GSVD at high SNR.

Fig. \ref{fig_N1} presents the numerical and analytical results of rates with different numbers of antennas. The transmission power is set as \(P = 40\) dBm. The power allocation coefficient is set as \(l_1 = 0.05\). For the case of \(M_2<N<M_1+M_2\), the approximation parameter \(\epsilon\) is set as \(\epsilon = 10^{-5}\). The numbers of antennas \(M_1\), \(M_2\) and \(N\) increase proportionally, with \(\mu\geq 1\) as the coefficient. The proportions are chosen as \((M_1,M_2,N)=(\mu,\mu,\mu)\), \((M_1,M_2,N)=(\mu,2\mu,\mu)\), \((M_1,M_2,N)=(2\mu,2\mu,3\mu)\) and \((M_1,M_2,N)=(3\mu,4\mu,5\mu)\), respectively. The analytical results are obtained from Theorem \ref{rate}. To have the same channel conditions for different \(M_1,M_2,N\), the deterministic matrices \(\bar{\mathbf{H}}_i\) are set as all-one matrices. From Fig. \ref{fig_N1}, the curves of numerical and analytical show a similar performance, which verifies the analytical results. In addition, it is worth noting that the analytical results are still pretty close to the numerical results when the numbers of antennas are very small, for example, \((M_1,M_2,N)=(1,1,1)\). This indicates proposed approach in this paper shows a good performance even in small-scale MIMO communications, which demonstrates the generalization of the approach. 

\begin{figure}
	\centering
	\includegraphics[scale=0.6]{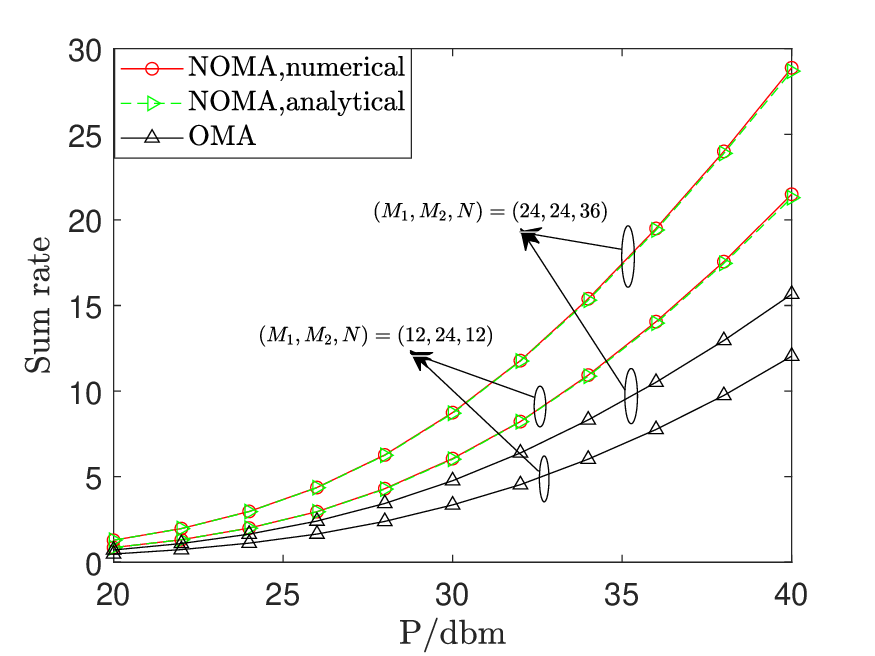}
	\caption{Numerical and analytical results of average sum rates \(\bar{R}_1+\bar{R}_2\) for the NOMA and OMA schemes with different SNR.}
	\label{fig_OMA}
\end{figure}

\begin{figure}
	\centering
	\subfigure[The case of \(M_2\geq N\).]
	{\includegraphics[scale=0.6]{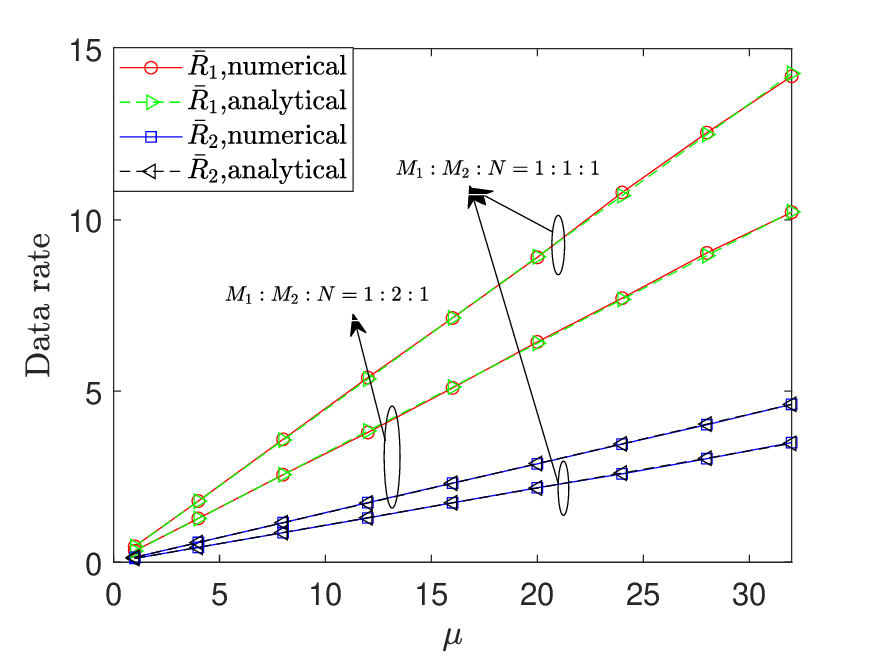}}
	\subfigure[The case of \(M_2< N <M_1+M_2\).]
	{\includegraphics[scale=0.6]{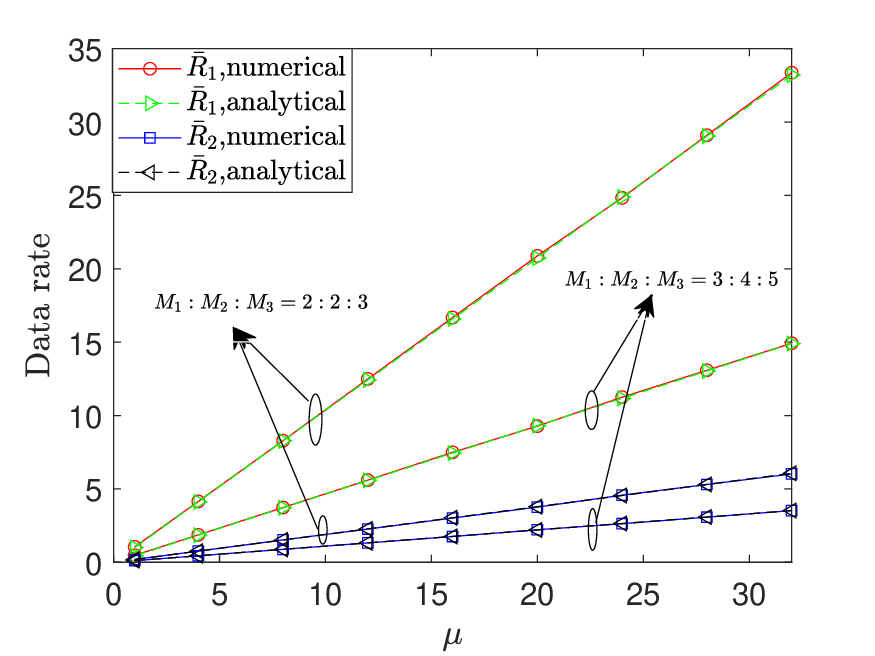}}
	\caption{Numerical and analytical results of average rates with different numbers of antennas. \(\mu\) represents the proportionality coefficient of the number of antennas.}
	\label{fig_N1}
\end{figure}

Fig. \ref{fig_04} presents the numerical results of the channel average data rate \(R_i = \frac{1}{S}\sum_{j=1}^{S}R_{i,j}\) as scatter points. The results of \(\frac{1}{S}\bar{R}_i\) are also presented as real lines for comparison. Channel proportions are chosen as \((M_1,M_2,N)=(\mu,2\mu,\mu)\) and \((M_1,M_2,N)=(2\mu,2\mu,3\mu)\), respectively. From Fig. \ref{fig_04}, \(R_i\) can be observed to converge to \(\frac{1}{S}\bar{R}_i\) when \(\mu\) increases. When \(\mu\geq 20\), the scatter points converge on the real lines, and \(\frac{1}{S}\bar{R}_i\) can be used to approximate the value of \(R_i\). This result manifests the asymptotic property of large-scale MIMO communication systems. 

\begin{figure}
	\centering
	\includegraphics[scale=0.6]{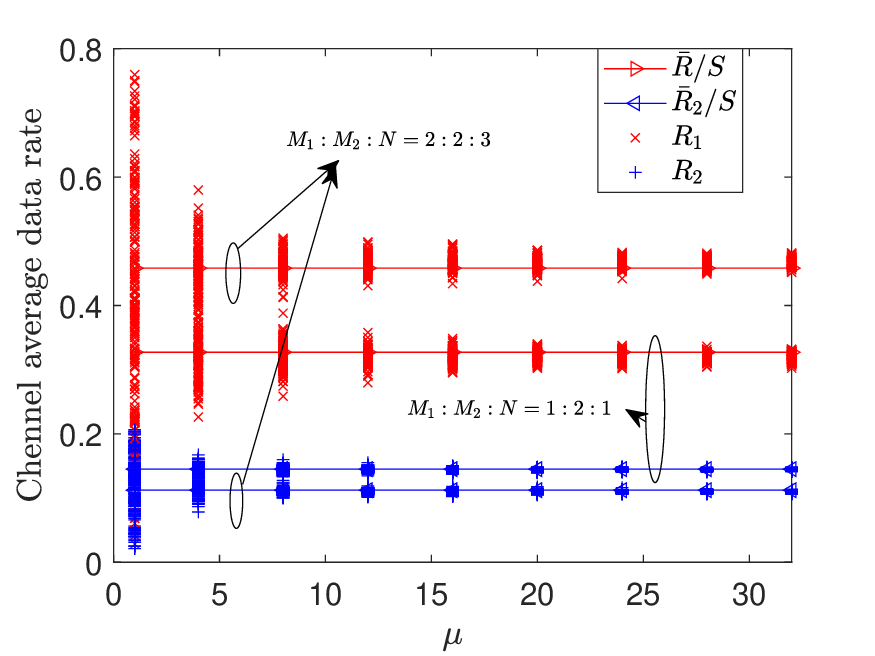}
	\caption{Numerical results of average channel data rate \(R_i = \frac{1}{S}\sum_{j=1}^{S}R_{i,j}\) and analytical results of average sum rates \(\frac{1}{S}\bar{R}_i\) with different numbers of antennas. \(\mu\) represents the proportionality coefficient of the number of antennas.}
	\label{fig_04}
\end{figure}

Fig. \ref{fig_ep} presents the numerical and analytical results of rates with different \(\epsilon\) when \(M_2<N<M_1+M_2\). The two sets of antennas are chosen as \((M_1,M_2,N)=(24,24,36)\) and \((M_1,M_2,N)=(36,48,60)\), respectively. \(\epsilon\) is chosen from \(10\) to \(0.1\). The deterministic matrices \(\bar{\mathbf{H}}_i\) are randomly generated. Since the only parameter that changes is the approximate parameter \(\epsilon\), the numerical results of average rates remain unchanged. It can be observed from Fig. \ref{fig_ep} that the accuracy of analytical results increases rapidly when \(\epsilon\) decreases. Specifically, when \(-\log\epsilon\geq 0.6\), i.e., \(\epsilon\leq 0.25\), the gap between numerical and analytical results can hardly be distinguished. This demonstrates the high accuracy of the proposed method in this paper.

\begin{figure}
	\centering
	\includegraphics[scale=0.6]{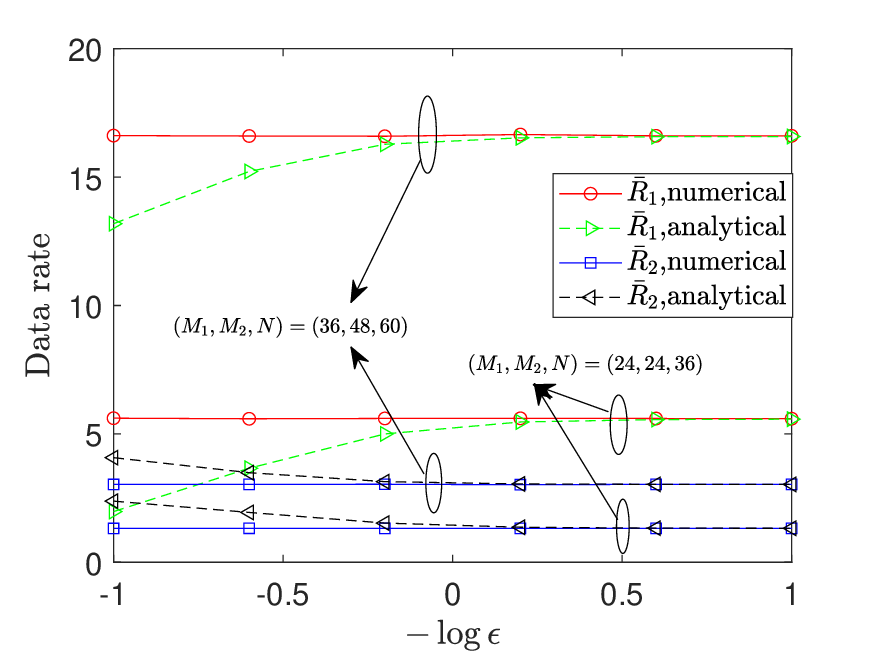}
	\caption{Numerical and analytical results of rates with different choose of \(\epsilon\) when \(M_2<N\).}
	\label{fig_ep}
\end{figure}	

Fig. \ref{fig_Ray} presents the numerical and analytical results of rates with different numbers of antennas when \(\bar{\mathbf{H}}_i=\mathbf{O}\), i.e., the channel is modeled as Rayleigh fading. The transmission power is set to be \(P = 40\) dBm. The power allocation coefficient is set as \(l_1 = 0.05\). The proportions are choosen as \((M_1,M_2,N)=(2\mu,2\mu,\mu)\) and \((M_1,M_2,N)=(3\mu,4\mu,5\mu)\), respectively. The analytical results are obtained from Theorem \ref{Rrate}. It can be seen from Fig. \ref{fig_Ray} that the differences between numerical and analytical results are almost invisible, which can verify the accuracy of the closed-form expressions in Theorem \ref{Rrate}. In addition, as is the case with the Rician distribution, the analytical results coincide with the numerical results well even \(\mu\) is very small, which shows the practical significance of the proposed approach.
 
Fig. \ref{fig_RayC} presents the closed-form analytical results in Theorem \ref{Rrate}, analytical results obtained with numerical integration in Theorem \ref{rate}, and the results in \cite{Chen}. Numerical results are also presented for comparison. Since the results in \cite{Chen} only considered the case of \(M_1=M_2\), the two sets of antennas are chosen as \((M_1,M_2,N)=(24,24,12)\) and \((M_1,M_2,N)=(48,48,60)\), respectively. The power allocation coefficient is set as \(l_1 = 0.05\). It can be seen that there is only a little difference between these results, which demonstrates the accuracy of these three methods.
 
\begin{figure}
	\centering
	\includegraphics[scale=0.6]{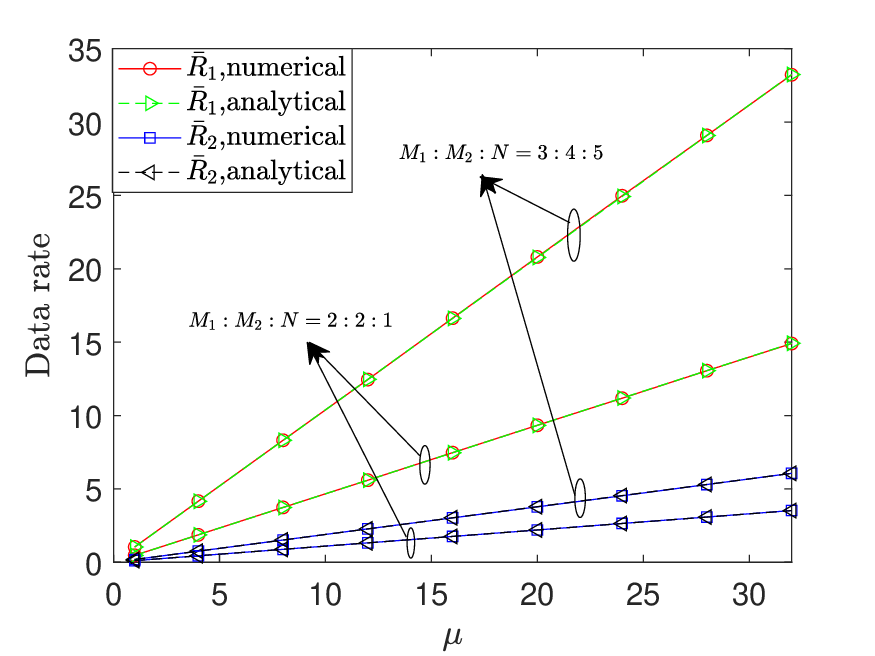}
	\caption{Numerical and analytical results of rates with different numbers of antennas when \(\bar{\mathbf{H}}_i=\mathbf{O}\). \(\mu\) represents the proportionality coefficient of the number of antennas.}
	\label{fig_Ray}
\end{figure}	

\begin{figure}
	\centering
	\includegraphics[scale=0.6]{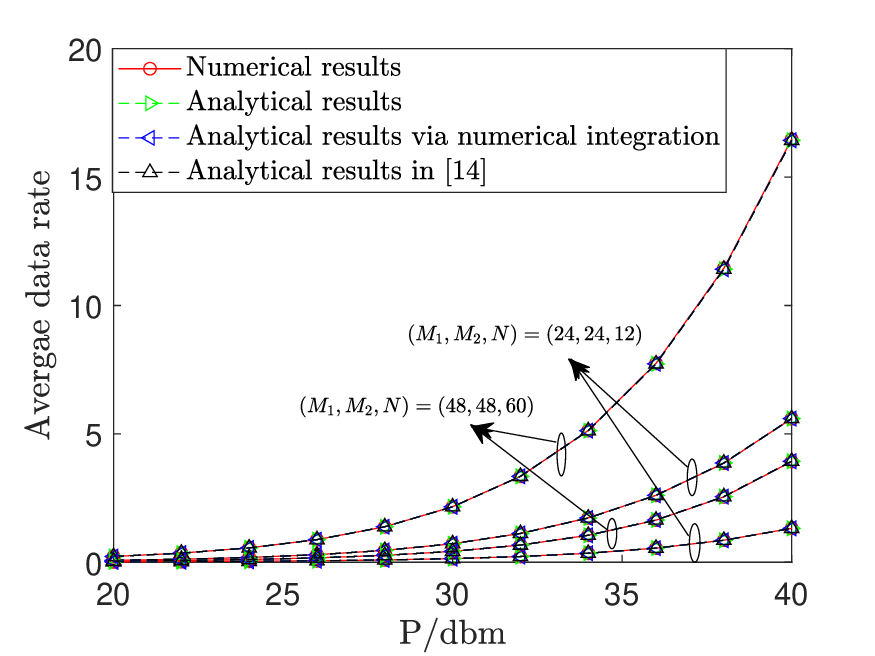}
	\caption{Numerical results and analytical results obtained by different methods when \(\bar{\mathbf{H}}_i=\mathbf{O}\) and \(M_1=M_2\).}
	\label{fig_RayC}
\end{figure}	

\section{Conclusion}
In this paper, a GSVD-based MIMO-NOMA communication system with Rician fading was considered. Based on the operator-valued free probability theory, the linearization trick and the deterministic equivalents method were exploited to obtain \(G_{\omega}(z_0)\), the Cauchy transform of GSVs of channel matrices. Then the average rates \(\bar{R}_i\) were obtained from \(G_{\omega}(z_0)\). In addition, the special case when \(\bar{\mathbf{H}}_i = \mathbf{O}\) was considered. The close-formed expressions of average rates are derived. Simulation results were provided to verify the accuracy of the analytical results.

\appendices

\section{Proof of Lemma \ref{ED}}\label{EDA}
Denote \(\mathbf{X} = \{\mathbf{x}_1,\mathbf{x}_2,...,\mathbf{x}_n\}\), \(\mathbf{Y} = \text{diag}\{y_1,...,y_n\}\), then \(E_{\mathcal{D}_{m}}\{\mathbf{X}\mathbf{Y}\mathbf{X}^H\}\) can be derived as
\begin{equation}
	\begin{aligned}
		E_{\mathcal{D}_{m}}\{\mathbf{X}\mathbf{Y}\mathbf{X}^H\} &= E_{\mathcal{D}_{m}}\{\sum_{i=1}^{n}y_i\mathbf{x}_i\mathbf{x}_i^H\}\\
		&=\sum_{i=1}^{n}y_iE_{\mathcal{D}_{m}}\{\mathbf{x}_i\mathbf{x}_i^H\} = \text{Tr}\{\mathbf{Y}\}\mathbf{I}_m.
	\end{aligned}
\end{equation}
The lemma is now proved.

\section{Proof of Theorem \ref{IT}}\label{ITA}

Since \(\mathbf{J} = \bar{\mathbf{J}} + \widetilde{\boldsymbol{\mathcal{J}}}\), \(\mathcal{G}^{\mathcal{D}_{4n}}_{\boldsymbol{\mathcal{J}}}(\mathbf{C})\) can be expressed with \(\bar{\mathbf{J}}\) and \(\widetilde{\boldsymbol{\mathcal{J}}}\) by applying subordination theorem as follows:
\begin{equation}\label{sub}
	\begin{aligned}
		\mathcal{G}^{\mathcal{D}_{4n}}_{\boldsymbol{\mathcal{J}}}(\mathbf{C}) &= 	\mathcal{G}^{\mathcal{D}_{4n}}_{\bar{\mathbf{J}}}(\mathbf{C}-\mathcal{R}^{\mathcal{D}_{4n}}_{\widetilde{\boldsymbol{\mathcal{J}}}}(\mathcal{G}^{\mathcal{D}_{4n}}_{\boldsymbol{\mathcal{J}}}(\mathbf{C})))\\
		&=E_{\mathcal{D}_{4n}}\{(\mathbf{C}-\mathcal{R}^{\mathcal{D}_{4n}}_{\widetilde{\boldsymbol{\mathcal{J}}}}(\mathcal{G}^{\mathcal{D}_{4n}}_{\boldsymbol{\mathcal{J}}}(\mathbf{C}))-\bar{\mathbf{J}})^{-1}\}.
	\end{aligned}
\end{equation}
where \(\mathcal{R}^{\mathcal{D}_{4n}}_{\widetilde{\boldsymbol{\mathcal{J}}}}()\) is the \(\mathcal{D}_{4n}\)-valued R-transform \cite{Lu}. Since \(\widetilde{\boldsymbol{\mathcal{J}}}\) is a hermitian matrix whose elements on and above the diagonal are freely independent, it is semicircular over \(\mathcal{D}_{12n}\) and free from the deterministic matrix \(\bar{G}\) \cite{Lu}. From \cite{Rt}, Th. 7.2, the \(\mathcal{D}_{4n}\)-valued R-transform of a semicircular variable can be written as follows:

\begin{equation}
	\mathcal{R}^{\mathcal{D}_{4n}}_{\widetilde{\boldsymbol{\mathcal{J}}}}(\mathbf{D}) = \mathcal{E}_{\mathcal{D}_{4n}}\{\widetilde{\boldsymbol{\mathcal{J}}}\mathbf{D}\widetilde{\boldsymbol{\mathcal{J}}}\}.
\end{equation}

Assume \(\mathcal{G}^{\mathcal{D}_{4n}}_{\boldsymbol{\mathcal{J}}}(\mathbf{C}) = \text{blkdiag}\{\mathbf{E}_1,\mathbf{E}_2,\mathbf{E}_3,\mathbf{E}_4\}\) and substitute it into \(\mathcal{R}^{\mathcal{D}_{4n}}_{\widetilde{\boldsymbol{\mathcal{J}}}}(\mathcal{G}^{\mathcal{D}_{4n}}_{\boldsymbol{\mathcal{J}}}(\mathbf{C}))\), we have
\begin{equation}\label{RT}
	\begin{aligned}
		\mathcal{R}^{\mathcal{D}_{4n}}_{\widetilde{\boldsymbol{\mathcal{J}}}}(\mathcal{G}^{\mathcal{D}_{4n}}_{\boldsymbol{\mathcal{J}}}(\mathbf{C})) &=  \text{blkdiag}\{\mathcal{E}_{\mathcal{D}_{n}}\{\widetilde{\boldsymbol{\mathcal{X}}}_2^H\mathbf{E}_2\widetilde{\boldsymbol{\mathcal{X}}}_2\},\\
		&\mathcal{E}_{\mathcal{D}_{n}}\{\widetilde{\boldsymbol{\mathcal{X}}}_2\mathbf{E}_1\widetilde{\boldsymbol{\mathcal{X}}}_2^H\},\mathcal{E}_{\mathcal{D}_{n}}\{\widetilde{\boldsymbol{\mathcal{X}}}_1\mathbf{E}_3\widetilde{\boldsymbol{\mathcal{X}}}_1\},\mathbf{O}\}.
	\end{aligned}
\end{equation}
To make it more concise, we denote
\begin{equation}
	\mathcal{R}^{\mathcal{D}_{4n}}_{\widetilde{\boldsymbol{\mathcal{J}}}}(\mathcal{G}^{\mathcal{D}_{4n}}_{\boldsymbol{\mathcal{J}}}(\mathbf{C})) =  \text{blkdiag}\{\mathbf{K}_1,
	\mathbf{K}_2,\mathbf{K}_3,\mathbf{O}\}.
\end{equation}

Then by substituting (\ref{RT}) into (\ref{sub}), the following equations can be derived:
\begin{equation}\label{E1}
	\mathbf{E}_1 = E_{\mathcal{D}_{n}}\{(\mathbf{B}-\mathbf{K}_1-\mathbf{K}_3-\bar{\mathbf{X}}_1+\bar{\mathbf{X}}_2^H(\mathbf{K}_2+\mathbf{I}_{n})^{-1}\bar{\mathbf{X}}_2)^{-1}\},
\end{equation}
\begin{equation}
	\mathbf{E}_2 = E_{\mathcal{D}_{n}}\{(-\mathbf{K}_2-\mathbf{I}_{n}-\bar{\mathbf{X}}_2(\mathbf{B}-\mathbf{K}_1-\mathbf{K}_3-\bar{\mathbf{X}}_1)^{-1}\bar{\mathbf{X}}_2^H)^{-1}\},
\end{equation}
\begin{equation}\label{E3}
	\mathbf{E}_3 = E_{\mathcal{D}_{n}}\{(\mathbf{B}-\mathbf{K}_1-\mathbf{K}_3-\bar{\mathbf{X}}_1+\bar{\mathbf{X}}_2^H(\mathbf{K}_2+\mathbf{I}_{n})^{-1}\bar{\mathbf{X}}_2)^{-1}\}.
\end{equation}

Then we are going to derive \(\mathbf{K}_i\). Denote the diagonal matrix \(\mathbf{E}_i = \text{blkdiag}\{\mathbf{E}_{i,1},\mathbf{E}_{i,2}\}\), where \(\mathbf{E}_{1,1}\in\mathbb{C}^{M_1\times M_1}\), \(\mathbf{E}_{1,2}\in\mathbb{C}^{N\times N}\), \(\mathbf{E}_{2,1}\in\mathbb{C}^{n-M_2\times n-M_2}\), \(\mathbf{E}_{2,2}\in\mathbb{C}^{M_2\times M_2}\), \(\mathbf{E}_{3,1}\in\mathbb{C}^{M_1\times M_1}\), \(\mathbf{E}_{3,2}\in\mathbb{C}^{N\times N}\). By substituting \(\mathbf{E}_2\) into \(\mathbf{K}_1 = E_{\mathcal{D}_{3n}}\{\widetilde{\boldsymbol{\mathcal{X}}}_2^H\mathbf{E}_2\widetilde{\boldsymbol{\mathcal{X}}}_2\}\), \(\mathbf{K}_1\) can be derived as follows:
\begin{equation}\label{k1}
	\mathbf{K}_1 = 
	\text{blkdiag}\{\mathbf{O},E_{\mathcal{D}_{N}}\{\widetilde{{\mathbf{H}}}_2^H\mathbf{E}_{2,2}\widetilde{{\mathbf{H}}}_2\}\}.
\end{equation}

From Lemma \ref{ED}, \(\mathbf{K}_1\) can be derived as follows:
\begin{equation}\label{k12}
	\begin{aligned}
		\mathbf{K}_1
		=\text{blkdiag}
		\{
		\mathbf{O},\text{Tr}\{\mathbf{E}_{2,2}\}\mathbf{I}_N
		\}.
	\end{aligned}
\end{equation}
Similarly, \(\mathbf{K}_2\) and \(\mathbf{K}_3\) can be derived as follows:
\begin{equation}
	\mathbf{K}_2 = \text{blkdiag}
	\{
	\mathbf{O},\text{Tr}\{\mathbf{E}_{1,2}\}\mathbf{I}_{M_2}
	\},
\end{equation}
\begin{equation}\label{K3}
	\mathbf{K}_3 = \text{blkdiag}
	\{
	\text{Tr}\{\mathbf{E}_{3,2}\}\mathbf{I}_{M_1},\text{Tr}\{\mathbf{E}_{3,1}\}\mathbf{I}_{N}
	\}.
\end{equation}

By substituting \(\mathbf{K}_2\) with the above results, \(\bar{\mathbf{X}}_2^H(\mathbf{K}_2+\mathbf{I}_{n})^{-1}\bar{\mathbf{X}}_2\) can be derived as follows:
\begin{equation}
	\bar{\mathbf{X}}_2^H(\mathbf{K}_2+\mathbf{I}_{n})^{-1}\bar{\mathbf{X}}_2 = \text{blkdiag}
	\{
	\mathbf{O},(1+\text{Tr}\{\mathbf{E}_{1,2}\})^{-1}\bar{\mathbf{H}}_2^H\bar{\mathbf{H}}_2
	\}.
\end{equation}

Denote \(\mathbf{A}_1\) as (\ref{A1}), note that \(\mathbf{E}_{1,j} = \mathbf{E}_{3,j}\), then \(\mathbf{E}_1\) and \(\mathbf{E}_3\) can be derived as follows:
\begin{equation}\label{E1E3}
	\begin{aligned}
		\mathbf{E}_1 = \mathbf{E}_3 = E_{\mathcal{D}_{n}}
		\{
		\begin{pmatrix}
			(z-\text{Tr}\{\mathbf{E}_{1,2}\})\mathbf{I}_{M_1}&-\bar{\mathbf{H}}_1\\
			-\bar{\mathbf{H}}_1^H&\mathbf{A}_1
		\end{pmatrix}^{-1}
		\}
		.
	\end{aligned}
\end{equation}
Extend (\ref{E1E3}) by applying the block matrix inverse formula in \cite{cookbook}, (\ref{E11}) and (\ref{E12}) can be derived.

Similarly, by substituting \(\mathbf{K}_1\) and \(\mathbf{K}_3\) with the results above, \(\bar{\mathbf{X}}_2(\mathbf{B}-\mathbf{K}_1-\mathbf{K}_3-\bar{\mathbf{X}}_1)^{-1}\bar{\mathbf{X}}_2^H\) can be derived as follows:
\begin{equation}
	\bar{\mathbf{X}}_2(\mathbf{B}-\mathbf{K}_1-\mathbf{K}_3-\bar{\mathbf{X}}_1)^{-1}\bar{\mathbf{X}}_2^H = \text{blkdiag}
	\{
	\mathbf{O},-\bar{\mathbf{H}}_2\mathbf{A}_2^{-1}\bar{\mathbf{H}}_2^H
	\}.
\end{equation}
Then \(\mathbf{E}_2\) can be derived as follows:
\begin{equation}
	\begin{aligned}
		&\mathbf{E}_2 \\&= E_{\mathcal{D}_{n}}
		\{
		\begin{pmatrix}
			-\mathbf{I}_{n\!-\!M_2}&\mathbf{O}\\
			\mathbf{O}&\bar{\mathbf{H}}_2\mathbf{A}_2^{-1}\bar{\mathbf{H}}_2^H\!-\!(1\!+\!\text{Tr}\{\mathbf{E}_{1,2}\})\mathbf{I}_{M_2}
		\end{pmatrix}^{-1}
		\},
	\end{aligned}
\end{equation}
which leads to (\ref{E22}), and the theorem is proved.

\section{Proof of Theorem \ref{Gw}}\label{GwA}
Denote \(d\) as a randomly chosen eigenvalue of \(\mathbf{L}\) (containing zero eigenvalues). Since \(\omega\) is randomly chosen from the \(S\) number of non-zero eigenvalues of \(\mathbf{L}\), \(G_{\mathbf{L}}(z_0)\) can be expressed as follows:
\begin{equation}
	\begin{aligned}
		G_{\mathbf{L}}(z_0) &= E\{(z_0-d)^{-1}\}\\
		&= \frac{S}{M_1}E\{(z_0-\omega)^{-1}\}+\frac{M_1-S}{M_1}E\{(z_0-0)^{-1}\}\\
		&= \frac{S}{M_1}G_{\omega}(z_0)+\frac{M_1-S}{M_1z_0},
	\end{aligned}
\end{equation}
which leads to (\ref{GwE}), and the theorem is proved.

\section{Proof of Theorem \ref{IT2}}\label{IT2A}
The approach is similar to the case \(M_2 > N\) in Appendix \ref{ITA}. Denote \(\mathcal{G}^{\mathcal{D}_{4n}}_{\boldsymbol{\mathcal{J}}}(\mathbf{C}) = \text{blkdiag}\{\mathbf{E}_1,\mathbf{E}_2,\mathbf{E}_3,\mathbf{E}_4\}\) and \(\mathcal{R}^{\mathcal{D}_{4n}}_{\widetilde{\boldsymbol{\mathcal{J}}}}(\mathcal{G}^{\mathcal{D}_{4n}}_{\boldsymbol{\mathcal{J}}}(\mathbf{C})) =  \text{blkdiag}\{\mathbf{K}_1,
\mathbf{K}_2,\mathbf{K}_3,\mathbf{O}\}\). Then by applying the subordination theorem, the following equations can be derived:
\begin{equation}\label{EE1}
	\mathbf{E}_1 = E_{\mathcal{D}_{n}}\{(\mathbf{B}-\mathbf{K}_1-\mathbf{K}_3-\bar{\mathbf{X}}_1+\bar{\mathbf{X}}_3^H(\mathbf{K}_2+\mathbf{I}_{n})^{-1}\bar{\mathbf{X}}_3)^{-1}\},
\end{equation}
\begin{equation}
	\mathbf{E}_2 = E_{\mathcal{D}_{n}}\{(-\mathbf{K}_2-\mathbf{I}_{n}-\bar{\mathbf{X}}_3(\mathbf{B}-\mathbf{K}_1-\mathbf{K}_3-\bar{\mathbf{X}}_1)^{-1}\bar{\mathbf{X}}_3^H)^{-1}\},
\end{equation}
\begin{equation}\label{EE3}
	\mathbf{E}_3 = E_{\mathcal{D}_{n}}\{(\mathbf{B}-\mathbf{K}_1-\mathbf{K}_3-\bar{\mathbf{X}}_1+\bar{\mathbf{X}}_3^H(\mathbf{K}_2+\mathbf{I}_{n})^{-1}\bar{\mathbf{X}}_3)^{-1}\}.
\end{equation}
\begin{equation}
	\begin{aligned}
		&\{\mathbf{K}_1,\mathbf{K}_2,\mathbf{K}_3\} = \\
		&\{\mathcal{E}_{\mathcal{D}_{n}}\{\widetilde{\boldsymbol{\mathcal{X}}}_3^H\mathbf{E}_2\widetilde{\boldsymbol{\mathcal{X}}}_3\},
		\mathcal{E}_{\mathcal{D}_{n}}\{\widetilde{\boldsymbol{\mathcal{X}}}_3\mathbf{E}_1\widetilde{\boldsymbol{\mathcal{X}}}_3^H\},\mathcal{E}_{\mathcal{D}_{n}}\{\widetilde{\boldsymbol{\mathcal{X}}}_1\mathbf{E}_3\widetilde{\boldsymbol{\mathcal{X}}}_1\}\}.
	\end{aligned}
\end{equation}

Since
\begin{equation}
	\begin{aligned}
		E_{\mathcal{D}_{N}}\{\widetilde{\mathbf{H}}_3^H\mathbf{E}_{2,2}\widetilde{\mathbf{H}}_3\} = E_{\mathcal{D}_{N}}\{\widetilde{\mathbf{H}}_2^H\mathbf{E}_{2,2}^{(1)}\widetilde{\mathbf{H}}_2\} = \text{Tr}\{\mathbf{E}_{2,2}^{(1)}\}\mathbf{I}_N,
	\end{aligned}
\end{equation}
where \(\mathbf{E}_{2,2}^{(1)}\) is the left-up \(M_2\times M_2\) block of \(\mathbf{E}_{2,2}\). Then \(\mathbf{K}_1\) can be derived as follows:
\begin{equation}
	\begin{aligned}
		\mathbf{K}_1
		=\text{blkdiag}
		\{
		\mathbf{O},\text{Tr}\{\mathbf{E}_{2,2}^{(1)}\}\mathbf{I}_N
		\}.
	\end{aligned}
\end{equation}
Similarly, \(\mathbf{K}_2\) can be derived as follows:
\begin{equation}
	\begin{aligned}
		\mathbf{K}_2
		=\text{blkdiag}
		\{
		\mathbf{O},\text{Tr}\{\mathbf{E}_{1,2}\}\mathbf{I}_{M_2},\mathbf{O}_{N-M_2}
		\}.
	\end{aligned}
\end{equation}
\(\mathbf{K}_3\) has the same expression with (\ref{K3}). By substituting \(\mathbf{K}_2\) with the results above, \(\bar{\mathbf{X}}_3^H(\mathbf{K}_2+\mathbf{I}_{n})^{-1}\bar{\mathbf{X}}_3\) can be derived as follows:
\begin{equation}
	\begin{aligned}
		&\bar{\mathbf{X}}_3^H(\mathbf{K}_2+\mathbf{I}_{n})^{-1}\bar{\mathbf{X}}_3 \\
		&=\text{blkdiag}
		\{
		\mathbf{O},(1+\text{Tr}\{\mathbf{E}_{1,2}\})^{-1}\bar{\mathbf{H}}_2^H\bar{\mathbf{H}}_2 + \epsilon^2\mathbf{J}^H\mathbf{J}
		\}.
	\end{aligned}
\end{equation}
Denote \(\mathbf{A}_1\) as (\ref{A1}), and \(\mathbf{E}_1\) and \(\mathbf{E}_3\) can be derived as the same expression as (\ref{E1E3}). Then (\ref{E112}) and (\ref{E122}) can be obtained.

Similarly, by substituting \(\mathbf{K}_1\) and \(\mathbf{K}_3\) with the results above, \(\bar{\mathbf{X}}_3(\mathbf{B}-\mathbf{K}_1-\mathbf{K}_3-\bar{\mathbf{X}}_1)^{-1}\bar{\mathbf{X}}_3^H\) can be derived as follows:
\begin{equation}
	\begin{aligned}
		\bar{\mathbf{X}}_3(&\mathbf{B}\!-\!\mathbf{K}_1\!-\!\mathbf{K}_3\!-\!\bar{\mathbf{X}}_1)^{-1}\bar{\mathbf{X}}_3^H = \text{blkdiag}
		\{
		\mathbf{O},-\bar{\mathbf{H}}_3\mathbf{A}_2^{-1}\bar{\mathbf{H}}_3^H
		\}\\
		&=
		\begin{pmatrix}
			\mathbf{O}&\mathbf{O}&\mathbf{O}\\
			\mathbf{O}&\bar{\mathbf{H}}_2\mathbf{A}_2^{-1}\bar{\mathbf{H}}_2^H&\epsilon\bar{\mathbf{H}}_2\mathbf{A}_2^{-1}\mathbf{J}^H\\
			\mathbf{O}&\epsilon\mathbf{J}\mathbf{A}_2^{-1}\bar{\mathbf{H}}_2^H&\epsilon^2\mathbf{J}\mathbf{A}_2^{-1}\bar{\mathbf{J}}^H
		\end{pmatrix}.
	\end{aligned}
\end{equation}
Then \(\mathbf{E}_2\) can be derived as follows:
\begin{equation}
	\begin{aligned}
		\mathbf{E}_2 = 
		\begin{pmatrix}
			-\mathbf{I}_{M_1}&\mathbf{O}\\
			\mathbf{O}&E_{\mathcal{D}_{n}}\{\mathbf{T}^{-1}\}
		\end{pmatrix}^{-1},
	\end{aligned}
\end{equation}
where
\begin{equation}\label{T}
	\begin{aligned}
		&\mathbf{T} = \\
		&\begin{pmatrix}
			\bar{\mathbf{H}}_2\mathbf{A}_2^{-1}\bar{\mathbf{H}}_2^H\!-\!(1\!+\!\text{Tr}\{\mathbf{E}_{1,2}\})\mathbf{I}_{M_2}&\epsilon\bar{\mathbf{H}}_2\mathbf{A}_2^{-1}\mathbf{J}^H\\
			\epsilon\mathbf{J}\mathbf{A}_2^{-1}\bar{\mathbf{H}}_2^H&\epsilon^2\mathbf{J}\mathbf{A}_2^{-1}\bar{\mathbf{J}}^H\!-\!\mathbf{I}_{N\!-\!M_2}
		\end{pmatrix}.
	\end{aligned}
\end{equation}
\(\mathbf{E}_{2,2}^{(1)}\) can be obtained by applying the block matrix inverse formula to (\ref{T}), and the theorem is proved.

\section{Proof of Theorem \ref{Gw2}}\label{Gw2A}
Denote \(d\) as a randomly chosen eigenvalue of \(\mathbf{L}\). Then \(G_{\mathbf{L}}(z_0)\) can be expressed as follows:
\begin{equation}
	\begin{aligned}
		G_{\mathbf{L}}(z_0) &= E\{(z_0-d)^{-1}\}\\
		&= \frac{S}{M_1}E\{(z_0-\omega')^{-1}\}+\frac{M_1-S}{M_1}E\{(z_0-e)^{-1}\},
	\end{aligned}
\end{equation}
where \(\omega'\) denotes randomly chosen element from \(\{\mu_1,\mu_2,...,\mu_S\}\), \(e\) denotes the randomly chosen element from \(\{\mu_{S+1},\mu_{S+2},...,\mu_{M_1}\}\). From Theorem \ref{ellipse}, when \(\epsilon\to 0\), \(\omega'\to\omega\) \(e\to+\infty\). Then the approximated value of \(G_{\mathbf{L}}(z_0)\) can be known as follows:
\begin{equation}
	G_{\mathbf{L}}(z_0) \to \frac{S}{M_1}E\{(z_0-\omega)^{-1}\} = \frac{S}{M_1}G_{\omega}(z_0),
\end{equation}
which leads to (\ref{GwE2}) can be derived, and the theorem is proved.

\section{Proof of Theorem \ref{rate}}\label{rateA}
Denote \(a = \frac{l_1\rho}{td_1^{\tau}}\), by differentiating (\ref{aR1}) with respect to \(a\), we have that
\begin{equation}
	\begin{aligned}
		\frac{d\bar{R}_1}{da} &= \int_{0}^{+\infty}\frac{x}{(a+1)x+1}f_{\omega}(x)dx\\
		&=\frac{1}{a+1}+\frac{1}{(a+1)^2}G_{\omega}(-(a+1)^{-1}).
	\end{aligned} 
\end{equation}
Since \(\bar{R}_1 = 0\) when \(a = 0\), \(\bar{R}_1\) can be expressed as (\ref{R1}). 

Denote \(b = \frac{\rho}{td_2^{\tau}}\), then by differentiating (\ref{aR2}) with respect to \(b\), we have that
\begin{equation}
	\begin{aligned}
		\frac{d\bar{R}_2}{db} =& \int_{0}^{+\infty}\frac{1}{x+1+b}\cdot\frac{l_2x+l_2}{x+1+l_1b}f_{\omega}(x)dx\\
		&-\frac{l_1l_2}{1-l_1}\int_{0}^{+\infty}\frac{1}{x+1+l_1b}f_{\omega}(x)dx\\
		=&-G_{\omega}(-(1+b))+l_1G_{\omega}(-(1+l_1b)).
	\end{aligned} 
\end{equation}
Since \(\bar{R}_2 = 0\) when \(b = 0\), \(\bar{R}_2\) can be expressed as (\ref{R2}). The theorem is proved.

\section{Proof of Theorem \ref{Rrate}}\label{RrateA}
When, \(M_2\geq N\), from Theorem \ref{Gw}, \(\int_a^b G_{\omega}(z)dz\) can be expressed by \(I(a,b) = \int_{a}^{b} G_{\mathbf{L}}(z)dz\) as follows:
\begin{equation}
	\int_a^b G_{\omega}(z)dz = \frac{M_1}{S}I(a,b)-\frac{M_1-S}{S}\ln \frac{b}{a}.
\end{equation}
Denote \(x = -\frac{1}{y+1}\), (\ref{R1}) can be rewritten as follows:
\begin{equation}
	\begin{aligned}
		\bar{R}_1 &= \log\left(1+\frac{l_1\rho}{td_1^{\tau}}\right) +  \int_{-1}^{-\frac{td_1^{\tau}}{l_1\rho+ td_1^{\tau}}}G_{\omega}(x)dx\\
		&=\log\left(1+\frac{l_1\rho}{td_1^{\tau}}\right)+\frac{M_1}{S}I(-1,-\frac{td_1^{\tau}}{l_1\rho+td_1^{\tau}})\\
		&-\frac{M_1-S}{S}\ln(\frac{td_1^{\tau}}{l_1\rho+ td_1^{\tau}}),
	\end{aligned}
\end{equation}
and then (\ref{RR1}) can be derived.

Denote \(y_1 = -(1+y)\), \(y_2 = -(1+l_1y)\), (\ref{R2}) can be rewritten as follows:
\begin{equation}
	\begin{aligned}
		\bar{R}_2 &= \int_{-1}^{-1-\frac{\rho}{td_2^{\tau}}}G_{\omega}(y_1)dy_1-\int_{-1}^{-1-\frac{\rho l_1}{td_2^{\tau}}}G_{\omega}(y_2)dy_2\\
		&=\frac{M_1}{S}I(-1,-1-\frac{\rho}{td_2^{\tau}})
		-\frac{M_1-S}{S}\ln(1+\frac{\rho}{td_2^{\tau}})\\
		&-\frac{M_1}{S}I(-1,-1-\frac{\rho l_1}{td_2^{\tau}})
		+\frac{M_1-S}{S}\ln(1+\frac{\rho l_1}{td_2^{\tau}}),
	\end{aligned}
\end{equation}
and then (\ref{RR2}) can be derived.

Similarly, when \(M_2<N<M_1+M_2\), from Theorem \ref{Gw2}, \(\int_a^b G_{\omega}(z)dz\) can be expressed as follows:
\begin{equation}
	\int_a^b G_{\omega}(z)dz = \frac{M_1}{S}I(a,b).
\end{equation}
Then (\ref{R1}) and (\ref{R2}) can be rewritten as follows:
\begin{equation}
	\bar{R}_1 = log\left(1+\frac{l_1\rho}{td_1^{\tau}}\right)+\frac{M_1}{S}I(-1,-\frac{td_1^{\tau}}{l_1\rho+td_1^{\tau}}),
\end{equation}
\begin{equation}
	\bar{R}_2 = \frac{M_1}{S}I(-1-\frac{\rho l_1}{td_2^{\tau}},-1-\frac{\rho}{td_2^{\tau}}),
\end{equation}
respectively.

The theorem is proved.

\vspace{-1em}

\section{Proof of Theorem \ref{PDF}}\label{PDFA}
When \(M_2\geq N\), from \cite{FPT}, the PDF of \(\omega\) can be derived from
\begin{equation}
	\begin{aligned}
		f_{\omega}(x) &= -\lim\limits_{y\to 0^+}\frac{1}{\pi}\text{Im}\{G_{\omega}(x+iy)\}\\
		&=-\lim\limits_{y\to 0^+}\frac{M_1}{\pi S}\text{Im}\{G_{\mathbf{L}}(x+iy)\}-\frac{M_1-S}{S(x+iy)},
	\end{aligned}
\end{equation}
where \(\text{Im}\{z\}\) represents the imaginary part of \(z\). It is obvious that
\begin{equation}
	\lim\limits_{y\to 0^+}\text{Im}\{\frac{M_1\!-\!N\!+\!2M_1z\!+\!M_2z\!-\!Nz}{(x+iy)(x+1+iy)}\!-\!\frac{M_1\!-\!S}{S(x\!+\!iy)}\} = 0.
\end{equation}

When \(z = x+iy\), \(x > 0\), \(\lim\limits_{y\to 0^+}\text{Im}\{\frac{\sqrt{\Delta(z)}}{(z)(z)}\}\) can be written as follows:
\begin{equation}
	\begin{aligned}
		\lim\limits_{y\to 0^+}\text{Im}\{\frac{\sqrt{\Delta(z)}}{z(z+1)}\} &= \lim\limits_{y\to 0^+}\text{Im}\{\frac{\sqrt{\Delta(x+iy)}}{(x+iy)(x+1+iy)}\}\\
		&=\lim\limits_{y\to 0^+}\text{Im}\{\frac{\sqrt{|\Delta(x+iy)|}e^{j\frac{\theta}{2}}}{|x+iy||x+1+iy|}\}\\
		&=\lim\limits_{y\to 0^+}\frac{\sqrt{|\Delta(x)|}}{x(x+1)}\sin(\frac{\theta}{2}),
	\end{aligned}
\end{equation}
where \(\theta\) denotes the angles of \(\Delta(x+iy)\). Extend \(\Delta(x+iy)\) as
\begin{equation}
	\begin{aligned}
		\Delta(x+iy) =& ((M_2-N)^2(x^2-y^2)+(M_2-N)^2-2Qx) \\
		&+ 2((M_1-N)^2x-Q)yi,
	\end{aligned}
\end{equation}
then \(\lim\limits_{y\to 0^+}\theta\) can be known as follows:
\begin{equation}
	\lim\limits_{y\to 0^+}\theta = \left\{
	\begin{aligned}
		&\pi,x_1<x<x_2\\
		&0,x < x_1\ \text{or}\ x > x_2
	\end{aligned}
	\right.,
\end{equation}
where \(x_1\) and \(x_2\) are roots of \((M_2-N)^2x^2+(M_1-N)^2-2Qx = 0\), which can be derived as (\ref{x12}). Then \(f_{\omega}(x)\) can be derived as follows:
\begin{equation}\label{fx}
	f_{\omega}(x) = \left\{
	\begin{aligned}
		&\frac{\sqrt{-\Delta(x)}}{{2\pi Sx(x+1)}}, x_1<x<x_2\\
		&0,\text{otherwise}
	\end{aligned}
	\right..
\end{equation}

When \(M_2<N<M_1+M_2\), \(f_{\omega}(x)\) can be written as follows:
\begin{equation}
	f_{\omega}(x)=-\lim\limits_{y\to 0^+}\frac{M_1}{\pi S}\text{Im}\{G_{\mathbf{L}}(x+iy)\},
\end{equation}
which is equivalent to the case of \(M_2\geq N\). Thus the expression of PDF is equal to (\ref{fx}). The theorem is now proved. 

\bibliographystyle{IEEEtran}
\bibliography{ref}

\end{document}